\newcommand{\rst}[1]{\ensuremath{{\mathbin\upharpoonright}%
\raise-.5ex\hbox{$#1$}}}  
\newtheorem{theorem}{Theorem}[section]
\newtheorem{definition}[theorem]{Definition}
\newtheorem{lemma}[theorem]{Lemma}
\newtheorem{corollary}[theorem]{Corollary}
\newcommand{\qed}{}
\newcommand{\bull}{\rule{.85ex}{1ex} \par \bigskip}
\newenvironment{proof}{\noindent {\bf Proof:\ }}{\hfill \bull}
\def\E{{\mathbb E}}
\def\Prob{{\mathbb P}}
\title{Feedback from Nature: \\Simple Randomised Distributed Algorithms\\for\\ Maximal Independent Set Selection and Greedy Colouring}
\date{}
\author{
Peter Jeavons$^1$, Alex Scott$^2$, Lei Xu$^1$\\[0.5cm]
{$^1$Department of Computer Science, University of Oxford, UK}\\
\texttt{\{peter.jeavons,lei.xu\}@cs.ox.ac.uk}\\[0.3cm]
% %
{$^2$Mathematical Institute, University of Oxford, UK}\\
\texttt{scott@maths.ox.ac.uk}\\[0.5cm]
% %
}
\newtheorem{subclaim}{Claim}{\it}{\rm}
\begin{document}

\maketitle

\begin{abstract}
We propose distributed algorithms for two well-established problems that
operate efficiently under extremely harsh conditions. 
Our algorithms achieve state-of-the-art performance in a simple and novel way.

Our algorithm for maximal independent set selection operates on a network of identical anonymous 
processors. The processor at each node has no prior information about the network. 
At each time step, each node can only broadcast a single bit to all its neighbours, 
or remain silent. 
Each node can detect whether one or more neighbours have broadcast, but cannot
tell how many of its neighbours have broadcast, or which ones.

We build on recent work of Afek et al. which was 
inspired by studying the development of a network of cells in the fruit fly~\cite{Afek2011a}.
However we incorporate for the first time another important feature of the biological system: 
varying the probability value used at each node based on local feedback from neighbouring nodes. 
Given any $n$-node network, 
our algorithm achieves the optimal expected time complexity of $O(\log n)$ rounds 
and the optimal expected message complexity of $O(1)$ single-bit messages broadcast by each node.
We also show that the previous approach, without feedback, 
cannot achieve better than $\Omega(\log^2 n)$ expected time complexity, 
whatever global scheme is used to choose the probabilities. 

Our  algorithm for distributed greedy colouring 
works under similar harsh conditions: each identical node has no prior information about the network,
can only broadcast a single message to all neighbours 
at each time step representing a desired colour, and can only
detect whether at least one neighbour has broadcast each colour value.
We show that our algorithm has an expected time complexity of $O(\Delta+\log n)$, 
where $\Delta$ is the maximum degree of the network, 
and expected message complexity of $O(1)$ messages broadcast by each node.

\end{abstract}

\section{Introduction}
\label{secIntro}
One of the most fundamental problems in distributed computing is to distributively choose a set of local
leaders in a network of connected processors so that every processor is either a leader or connected to a
leader, and no two leaders are connected to each other. This problem is known as the distributed 
{\em maximal independent set} (MIS) selection problem 
and has been considered as a challenging problem for 
three decades~\cite{Afek2011a}. 
It has many applications, especially in wireless networks~\cite{Kroeker2011b,Peleg2000d} 
and has been extensively
studied~\cite{Luby1986a,Linial1992l,Alon1986a,Kuhn2005f,Kuhn2006t,Moscibroda2005m,Metivier2011a}. 

Another fundamental problem in distributed computing that is closely related to the distributed MIS 
selection problem is the \emph{$(\Delta+1)$-colouring} problem. 
In this problem the aim is to colour the vertices of a graph which has maximum degree $\Delta$ 
using no more than $\Delta+1$ colours so that adjacent vertices are assigned different colours. 
Like the distributed MIS selection problem, the distributed $(\Delta+1)$-colouring problem 
also serves as a basic building block in many other distributed
algorithms, and has many applications for resource assignment, 
in particular for frequency assignment in radio-communication
networks~\cite{Maan2012a,Waters2005g,Ni2011c,Park1996a}. 
Because of this, it has also been extensively
studied~\cite{Barenboim2009d,Barenboim2011d,Kuhn2009w,Panconesi1996o,Schneider2010a,Johansson1999s,Panconesi2001s,Hedetniemi2003l,Hansen2004d}.

A more restricted variant of the colouring problem is called 
\emph{greedy colouring}~\cite{Grundy1939m,Gavoille2009o}, where the aim is to obtain 
a colouring with the property that no individual vertex can be recoloured with a smaller colour 
(in some fixed ordering of the colours). Computing a greedy colouring distributively is
believed to be more difficult than computing an arbitrary $(\Delta+1)$-colouring 
distributively~\cite{Gavoille2009o}, but such colourings often use a much smaller number of colours.

\subsection{Our Results}

In this paper, we first propose a randomised distributed MIS selection algorithm 
that is able to operate under very harsh conditions. Our model of distributed computing
assumes an identical anonymous processor at each node that has {\em no information about the network}.
At each time step, each node can only {\em broadcast a single bit} to all its neighbours, 
or remain silent. 
Each node can detect whether one or more neighbours have broadcast, but cannot
tell how many neighbours have broadcast, {\em or which ones}.

We prove that our algorithm is optimal in both time and bit complexity for such a model, 
by showing that it runs in expected $O(\log n)$ time, where $n$ is the number of nodes,
and the expected number of messages sent by 
each node is bounded by a constant, regardless of the network.

We then extend the approach to obtain an algorithm for the distributed greedy colouring problem. 
This algorithm also runs under very harsh conditions where the processors are anonymous and
have no information about the network. For this problem we allow each node to broadcast
only a single message to all neighbours at each time step representing a single desired colour value.
Once again nodes can only detect whether at least one neighbour has broadcast 
a colour, and cannot tell how many neighbours have broadcast, or which ones.

The algorithm we obtain is remarkably simple and computes a greedy colouring in expected $O(\Delta+\log n)$ time,
where $n$ is the number of nodes and $\Delta$ is the maximum degree of the network.
Once again the expected total number of messages sent by each processor is bounded by a constant.
As well as matching the best known time complexity for obtaining a greedy colouring, 
our algorithm is the first proposed algorithm for this problem 
where the nodes require no prior knowledge of the network
and cannot distinguish between their neighbours.

To obtain our results we introduce a new form of analysis to determine the time complexity.
Nearly all previous analytical techniques in this area have relied on a general technique, 
originally devised by Luby~\cite{Luby1986a}, which divides the computation into successive phases 
and shows that some fixed fraction of the network is expected to be eliminated in each phase, 
so that there are at most logarithmically many phases. Our algorithms do not have this property, and
hence require a more flexible form of analysis, which we describe in detail below.

\section{Preliminaries}
\label{secPre}

Given an undirected graph $G=(V, E)$, the \emph{neighbourhood} of each vertex $v \in V$ is defined to be
the set $\Gamma(v)=\{u:\{u,v\}\in E\}$ and the \emph{degree} of each vertex $v$ is defined 
to be the number $\deg_G(v)=|\Gamma(v)|$. We define the maximum degree of the graph $G$ to be the maximum
value of the degree over all vertices of $G$, which is denoted by $\Delta=\max_{v\in V}\{\deg_G(v)\}$. 
The number of vertices of $G$ is $|V|$ and will usually be denoted by $n$. 
We will say that an event on $G$ occurs \emph{with high probability} if
% it occurs with probability $1-o(1)$ as $n\to\infty$.
the probability of the event tends to 1 as $n$ tends to infinity.
% $1-\frac{c_k}{n^k}$ as $n\to\infty$, where $c_k$ depends only on $k$.
% $1-O(1/n^k)$.
We will write $\log a$ for the natural logarithm of $a$, 
and $\log_b a$ for the logarithm of $a$ to the base $b$.

\subsection{Maximal Independent Set Selection}
\begin{definition}[Maximal Independent Set]
Given an undirected graph $G = (V,E)$, an {\em independent set} in $G$ is a subset of vertices 
$U\subseteq V$, such that no two vertices in $U$ are adjacent. 
An independent set $U$ is called a {\em maximal independent set} (MIS) if no further vertex 
can be added to $U$ without violating independence.
\end{definition}

Different maximal independent sets for the same network can vary greatly in size. In contrast to the MIS selection problem, the related problem of finding a {\em maximum size} independent set (MaxIS) is notoriously hard. It is equivalent to finding a maximum clique in the complementary graph, and is therefore \textbf{NP}-hard~\cite{Karp1972r}. However, computing an arbitrary MIS (which is not necessarily of the maximum possible size) in linear time using a centralised sequential algorithm is trivial: simply scan the nodes in arbitrary order. If a node $u$ does not violate independence, add $u$ to the MIS. If $u$ violates independence, discard it. Hence the challenge is to compute such an MIS more efficiently in a distributed way with no centralised control.

\subsection{Greedy Colouring}

A proper colouring of a graph assigns a colour to each vertex such that no two adjacent vertices are assigned the same colour. The colouring is called a $k$-colouring if at most $k$ different colours are used.

\begin{definition}[Graph Colouring]
For any undirected graph $G = (V,E)$,  a {\em $k$-colouring} of $G$ is a function %$f:V\longrightarrow \{c_1,c_2, \ldots,c_k\}$ 
$f$ from the vertices $V$ to a set of colours $\{c_1,c_2, \ldots,c_k\}$
such that $f(u)\neq f(v)$ for every edge $\{u,v\}\in E$. 
% The elements of the set $\{c_1,c_2,\ldots,c_k\}$ are called {\em colours}. 
$G$ is called {\em $k$-colourable} if and only if there exists a $k$-colouring of $G$.
\end{definition}

For many practical applications it is desirable to minimize the number of colours used.  
The smallest possible positive integer $k$ for which there exists a $k$-colouring of $G$ 
is defined to be the \emph{chromatic number} $\chi$ of $G$. 
It is known to be \textbf{NP}-hard to approximate the chromatic number $\chi$ within 
a factor of $|V|^{1-\varepsilon}$, for any $\varepsilon>0$,
even using a centralised algorithm with complete knowledge of the graph~\cite{Zuckerman2006l}.

However, a number of heuristic approaches can be used to rapidly obtain colourings
with a reasonably low number of colours on many graphs.
For example, the following greedy approach produces a colouring in linear time
using a centralised control.

\begin{definition}[Greedy Colouring]
Given an arbitrary ordering, $(v_1,v_2,\ldots,v_n)$, of the vertices of $G$,
and an arbitrary ordering on the colour values, 
a \emph{greedy colouring} algorithm considers 
each vertex from $v_1$ to $v_n$ in turn, assigning each vertex  
the smallest possible colour value that is not already assigned to any of its neighbours.
\end{definition}

\noindent
Note that a colouring obtained in this way has the property that no individual vertex can be recoloured 
using a smaller colour. A colouring with this property 
is sometimes called a \emph{Grundy colouring}~\cite{Gavoille2009o,Grundy1939m}.
Since every greedy colouring algorithm produces a Grundy colouring,
and every Grundy colouring can be obtained
using a greedy colouring algorithm (by choosing a suitable
ordering on the vertices)~\cite{Gavoille2009o},
we will refer to any Grundy colouring as a greedy colouring,
even if it is computed in some other way. 

It is easy to see that a greedy colouring uses no more than $\Delta+1$ colours,
so we have that $\chi\leq\Delta+1$ for any graph $G$. 
Brooks Theorem strengthens this observation by stating that 
$\Delta$ colours suffice for all
graphs except odd cycles and complete graphs, which require $\Delta+1$ colours.

\subsection{Distributed Computation Model}

In the widely-used Linial model~\cite{Chaudhuri1987a,Linial1986l,Linial1992l,Gavoille2009o},
a distributed network is composed of a set $V$ of processors and a set $E$ of bidirectional 
communication links (channels) between pairs of processors. 
If there is a link (channel) between two processors, these two processors are said to be {\em neighbours}. 
A distributed network with $n$ processors where each processor has no more than $\Delta$ neighbours
corresponds to an undirected graph $G=(V, E)$ with $n$ vertices and maximum degree $\Delta$. 
A network is called \emph{anonymous} if the processors cannot distinguish each other 
by unique identifiers.
Linial's distributed computation model is a synchronous system and all processors operate in a 
lockstep fashion.
We will assume that all processors wake up and start their computation at the same time step.
During each time step, all processors act in parallel and carry out the following operations
sequentially~\cite{Lynch1996d}:

\begin{enumerate}
\item Optionally send a message to each neighbouring node;

\item Receive any messages sent by neighbours;

\item Perform arbitrary local computation.

\end{enumerate}
The computation is said to be complete only when the local computations at every vertex have terminated.

The distributed computation model we use is based on this model, but 
we impose the following severe additional conditions:
\begin{enumerate}
\item Each processor is anonymous and has no local or global information about the network;
\item At each time step, each processor either keeps silent or broadcasts 
one message to all its neighbours;
\item Each processor can tell whether at least one neighbour has broadcast a message, 
but cannot tell how many of them have done so, or which ones.
\end{enumerate}

\noindent
In our MIS algorithm we restrict the communication even further, so that each message contains
only a single bit. 
In our greedy colouring algorithm we allow longer messages representing different colours.

Information about a network may be difficult to obtain, or subject to 
uncertainty and change,
so it is desirable for some applications to find algorithms that can complete their task without 
using such information~\cite{Halpern1990k,Prakash1997a}.
Moreover, using a small number of messages, each containing a single bit (or a small number of bits),
allows an implementation to use less communication resources and less energy, 
and this may be crucial in some applications~\cite{Lenzen2012d}.
Because of the restrictions we impose, 
our algorithms can be implemented using very simple communication
mechanisms such as radio waves, optical signals, or even chemical signals, as in biological
intercellular signalling~\cite{Bray2006n,Collier1996p}.

\section{Related Results}

\subsection{Distributed MIS selection}

The study of distributed MIS selection can be traced back to the 1980s. 
It was shown early on that the MIS selection problem is in the complexity class \textbf{NC}~\cite{Karp1985a}, and hence likely to be a good candidate for a parallel
or distributed approach. 

We review the current state-of-the-art here, focusing on the size of the messages (in bits)
and the information about the network that is used at each node (see Table~\ref{TalgorithmsMIS}). 
In many cases it is possible to use estimates for the required graph parameters, and to 
iteratively refine these, at the cost of a more sophisticated algorithm and additional 
communication rounds, but we describe only the simplest versions of the algorithms,
as originally presented.

\begin{table*}[th]
\renewcommand{\arraystretch}{1.3}
% \tbl{Distributed MIS selection algorithms on general graphs with $n$ nodes and maximal degree $\Delta$\label{TalgorithmsMIS}}{
\caption{Distributed MIS selection algorithms on graphs with $n$ nodes and maximal degree $\Delta$\label{TalgorithmsMIS}}
\vspace{2mm}
%\centering 
\scalebox{0.9}{
% \begin{tabular}{|l|l|p{15mm}|p{35mm}|p{35mm}|}
\begin{tabular}{|l|l|l|l|l|}
      \hline
%  \multicolumn{6}{ |c| }{Distributed maximal independent set selection for general graphs} \\\hline
         Type                         &                    Time Steps                                   &             Message size (bits)                      &                       \parbox{6cm}{Information about the graph and neighbourhood used at each node}                                     &   Reference \\\hline\hline
\multirow{3}{*}{Det.}     & \multirow{2}{*}{$O(\Delta+\log^* n)$}           & \multirow{3}{*}{$\Omega(\log n)$}       &\multirow{3}{*}{\parbox{55mm}{Unique IDs, size and maximum degree of the graph, and distinguishable channels}}  & \cite{Barenboim2009d}\\\cline{5-5}           
                                         &                                                                           &                                                                  &                                                                                                                    &\cite{Kuhn2009w}\\\cline{2-2}\cline{5-5}
                                         & \multirow{1}{*}{$O(2^{O(\sqrt{\log n})})$}    &                                                                &                                                                                                                     & \cite{Panconesi1996o}\\ \cline{1-5}
\multirow{9}{*}{Rand.} & \multirow{4}{*}{$O(\log^2 n)$}                      &\multirow{1}{*}{$\Omega(\log \Delta)$} &  Maximum degree in 2-neighbourhood                                                       &\cite{Peleg2000d}\\ \cline{3-5}                                                                   
                                         &                                                                            &  \multirow{1}{*}{3}                                 & \multirow{1}{*}{None}                                                                             &\cite{Emek2013s}\\ \cline{3-5}
                                         &                                                                            & 1                                                               & Size of the graph                                                                                      &\cite{Afek2011a}\\ \cline{3-5}
                                         &                                                                            &  1                                                               &  None                                                                                                       & \cite{Afek2011b}\\ \cline{2-5}
                                         & \multirow{1}{*}{$O(\log \Delta\sqrt{\log n})$} & \multirow{1}{*}{$\Omega(\log n)$}        &  \parbox{6cm}{Size, maximum degree of the graph, and distinguishable channels}                       &\cite{Barenboim2012t}\\ \cline{2-5}
                                         & \multirow{4}{*}{$O(\log n)$}                            &  {$\Omega(\log n)$}                                &  Size of the graph                                                                                        &\cite{Lynch1996d}\\ \cline{3-5}
                                         &                                                                            &   $\Omega(\log \Delta)$                          & Degrees of neighbours                                                                                &\cite{Wattenhofer2007l}\\ \cline{3-5}
                                         &                                                                           &  1                                                              & Distinguishable channels                                                                                           &\cite{Metivier2011a}\\ \cline{3-5}
                                         &                                                                          & 1                                                                & \textbf{None}                                                                                              &\textbf{This paper}\\ \cline{1-5}
\end{tabular}}
\end{table*}

A lower bound of $\Omega(\log^* n)$ time for distributed MIS selection on graphs with 
$\Delta \ge 2$ is given in~\cite{Linial1987d}. 
The most well-known lower bound for distributed MIS selection on general graphs, 
$\Omega(\sqrt{\log n/(\log\log n)})$, is given in~\cite{Kuhn2004w}. 
% Another known result shows that there is a graph with $n$ vertices and with maximum degree
% $2^{\Theta(\sqrt{\log n})}$ where any (even randomised) distributed MIS selection algorithm requires
% $\Omega(\sqrt{\log n})$ time~\cite{Peleg2000d}. This gives a lower bound 
% for distributed MIS selection of $\Omega{(\min\{\log \Delta,\sqrt{\log n}\})}$ 
% on general graphs~\cite{Barenboim2012t}. 
This was improved to $\Omega{(\min\{\log \Delta,\sqrt{\log n}\})}$
in~\cite{Kuhn2010} (see also~\cite{Barenboim2012t}).
All of these lower bounds have been shown to apply to both deterministic and randomised algorithms.
It was observed in~\cite{Metivier2011a} that if only one-bit messages are allowed 
to be sent along each edge in any time step, then every distributed algorithm 
to select an MIS in a ring of size $n$ requires at least $\Omega(\log n)$ time steps 
with high probability.

For deterministic distributed MIS selection on general graphs, the fastest known algorithms run in
$O(\Delta+\log^* n)$ time~\cite{Kuhn2009w,Barenboim2009d} 
or $O(2^{O(\sqrt{\log n})})$ time~\cite{Panconesi1996o}. 
These deterministic MIS algorithms rely on very sophisticated multi-phase techniques, 
use a considerable amount of global information about the graph at each node, 
including unique node IDs,
and allow complex messages to be sent on specific channels between nodes.
Note that any deterministic algorithm requires some information at each node 
(such as a unique node ID) in order to break the symmetry~\cite{Itai1990s,Peleg2000d}. 

Using randomisation to break symmetry between nodes allows for simpler
algorithms, often requiring a smaller number of time steps.
A simple parallel randomised algorithm for distributed MIS selection in the PRAM model of computation
was presented in 1986 by Luby~\cite{Luby1986a} and independently by Alon et al.~\cite{Alon1986a}.

This algorithm has been adapted to the message-passing model of distributed computation in several
slightly different ways. In the version presented by Lynch~\cite{Lynch1996d} each processor
is assumed to know the total size, $n$, of the graph, and chooses a random integer in the range 
1 to $n^4$ at each time step. These integers are then broadcast as messages to all neighbouring nodes,
so the messages sent between processors contain $\Omega(\log n)$ bits.
Using these messages the nodes are able to compute an MIS by selecting the nodes that 
choose the largest random values in their neighbourhood, removing those nodes and their neighbours, 
and iterating this process. Using the analysis from~\cite{Luby1986a}, this process 
is shown to terminate in $O(\log n)$ time on average and with high
probability.

In the version presented by Wattenhofer~\cite{Wattenhofer2007l} the nodes choose a probability value based on their
degree in the graph, and use this value, together with the degree values of their neighbours to 
decide whether to join the MIS at each time step. In this variant the nodes
exchange messages to
determine the current degrees of their neighbours at each time step, and hence the 
messages sent between processors contain $\Omega(\log \Delta)$ bits.
Once again, using the analysis from~\cite{Luby1986a}, this process 
is shown to terminate in $O(\log n)$ time on average and with high probability.

In the version presented by Peleg~\cite{Peleg2000d} the probability value
at each node is chosen based on the maximum degree of the nodes 
at distance 1 or 2 away from it in the graph,
and this value is then used to decide whether to join the MIS at each time step. 
Peleg shows with a simpler analysis that this algorithm halts in $O(\log^2 n)$ time
on average and with high probability.
Once again the nodes
% must 
exchange messages to determine the current degrees of their neighbours 
at each time step, and hence the 
messages sent between processors contain $\Omega(\log \Delta)$ bits.

These distributed randomised algorithms, 
all based on a similar approach and generally known as Luby's algorithm, 
remained the state-of-the-art for many years, but there have recently been some new developments.
 
A new randomised MIS algorithm with time complexity $O(\log \Delta\sqrt{\log n})$
was proposed in~\cite{Barenboim2012t}. 
This algorithm improves on the $O(\log n)$ algorithms when $\log \Delta < \sqrt{\log n}$. 
On the other hand, this algorithm assumes that each processor knows 
the size and maximal degree of the graph and can distinguish 
between channels so that it can send different messages along different edges.
Since it relies on exchanging information about specific nodes,
using node identities, the messages exchanged in this algorithm contain $\Omega(\log n)$ bits.

Algorithms for MIS selection on special graphs such as sparse graphs and growth-bounded graphs 
have also been studied~\cite{Goldberg1988p,Barenboim2010s,Schneider2008a}. 

\subsection{MIS Selection with Limited Communication}

There has recently been considerable interest in finding 
efficient distributed MIS selection 
algorithms that can work in more restricted computational models, 
such as wireless network models~\cite{Moscibroda2005m,Emek2013s,Cornejo2010d,Afek2011a,Afek2011b,Metivier2011a,Xu2013s}. 

For example, the approach proposed in~\cite{Metivier2011a}
splits the randomly generated values at each node into single bits, and 
communicates them one by one. When these bits are broadcast to all neighbours, 
this approach achieves a time complexity of $O(\log^2 n)$. By distinguishing
between different neighbours, and having separate, overlapping, exchanges of messages with 
each neighbour, the overall time complexity is brought down to $O(\log n)$ time 
on average and with high probability. 
This is shown to be the optimal time complexity that can be achieved with one-bit messages~\cite{Metivier2011a}.
However, to achieve this optimal performance requires that each vertex can 
distinguish between its neighbours by locally known channel names, 
so that different messages can be sent along different edges at the same time step.

A more radical approach is the novel distributed MIS selection algorithm inspired by the neurological development of the fruit fly which is given in~\cite{Afek2011a,Afek2011b}.

During development, certain cells in the pre-neural clusters of the fruit fly specialise to become 
sensory organ precursor (SOP) cells, which later develop into cells attached to 
small bristles (microchaetes) on the fly that are used to sense the environment. 
During the first stage of this developmental process each cell either becomes an SOP or a neighbour 
of an SOP, and no two SOPs are neighbours. These observed conditions are identical to the formal
requirements in the maximal independent set selection problem.

However, Afek \textit{et al.} pointed out that the method used by the fly to select the SOPs 
appears to be rather different from the standard algorithms for choosing an MIS described above.
The cells of the fly appear to solve the problem using only simple local interactions between certain
membrane-bound proteins, notably the proteins Notch and Delta~\cite{Bray2006n,Collier1996p}. 
Moreover, they require very little knowledge about connectivity.
Based on their study of this developmental process, Afek \textit{et al.} proposed an
algorithm that works in a 
distributed model where each node can only broadcast to all its neighbours or remain silent.
Moreover, each node can only detect whether at least one neighbour has broadcast a signal.
This model of communication is sometimes referred to as a ``beeping'' model 
with collision detection~\cite{Afek2011b}.

In their proposed algorithm, each node broadcasts at each time step with a certain probability,
which changes over time, and then checks whether any of its neighbours has broadcast at the same time. 
As originally presented \cite{Afek2011a}, the algorithm
uses a sequence of gradually increasing global probability values
calculated from the total number of nodes of the graph $n$ and its maximum degree $\Delta$.
The algorithm was further refined by Afek \textit{et al.} in a later paper \cite{Afek2011b}.
In the later version the probability values are chosen according to a fixed pattern, so that the individual nodes require no information at all about the graph.
However, in both versions the expected number of time steps required 
was shown to be $O(\log^2 n)$ (see Table~\ref{TalgorithmsMIS}). 

Another approach to distributed computing  
with very restricted communication and processing capabilities is
the networked finite state machine model introduced in~\cite{Emek2013s}.
This only allows a fixed finite number of distinct messages, and very limited computation 
at each node, based on the notion of a randomised finite state machine, with no information 
about the network. 
It is shown in~\cite{Emek2013s} that a MIS can be computed in this very restricted model 
in $O(\log^2 n)$ time, using only 7 states and 7 corresponding messages.

\subsection{Distributed Colouring}

The problem of $(\Delta + 1)$-colouring is closely related to MIS selection
% , and each problem can easily be reduced to the other
\cite{Kuhn2006o}. 
Hence it can be shown that 
the lower bounds for distributed MIS selection mentioned above also apply for distributed 
$(\Delta+1)$-colouring. 
Similarly, some state-of-the-art distributed $(\Delta+1)$-colouring algorithms 
are closely related to the algorithms for distributed MIS selection described 
earlier (see Table~\ref{TalgorithmsGC}).

\begin{table*}[t]
\renewcommand{\arraystretch}{2}
\caption{Distributed $(\Delta+1)$-colouring algorithms on general graphs with $n$ nodes and maximal degree $\Delta$ \label{TalgorithmsGC}}
\vspace{2mm}
%\centering 
\small
\scalebox{0.85}{
   \begin{tabular}{|l|l|l|l|l|l|}
     % \hline
%  \multicolumn{7}{ |c| }{Distributed $\Delta+1$ colouring for general graphs} \\
 \hline
      Greediness & Type & Time Steps  & \parbox{12mm}{Message size (bits)} & Information used at each node& Reference \\\hline\hline
 \multirow{8}{*}{Non-greedy} &\multirow{3}{*}{Det.} & \multirow{2}*{$O(\Delta+\log^*n)$}               &{\multirow{3}{*}{$\Omega(\log n)$}}& \multirow{3}{*}{\parbox{60mm}{Unique IDs, size and maximum degree of the graph, and distinguishable channels}} & \cite{Barenboim2009d} \\ \cline{6-6}
                                                    &                                     &                                                                           &                                                             &                                                                                                                  & \cite{Kuhn2009w}\\ \cline{3-3}\cline{6-6}
                                                    &                                     & \multirow{1}{*}{$O(2^{O(\sqrt{\log n})})$}   &                                              &                                                                                                                    & \cite{Panconesi1996o}\\ \cline{2-6}

            & \multirow{5}{*}{Rand.}  & \multirow{1}{*}{$O(\log \Delta+\sqrt{\log n})$}                    & \multirow{1}{*}{$\Omega(\log \Delta)$} & \parbox{60mm}{Upper bound on the size of the graph and distinguishable channels} & \cite{Schneider2010a}\\ \cline{3-6}
            &                                        & \multirow{1}{*}{$O(\log \Delta+2^{O(\sqrt{\log \log n})})$} & \multirow{1}{*}{$\Omega(\log n)$} & {\parbox{60mm}{Unique IDs, size and maximum degree of the graph, and distinguishable channels}} & \cite{Barenboim2012t}\\ \cline{3-6}
            &                                        &  \multirow{3}{*}{$O(\log n)$}                                                  &  $\Omega(\log n)$                             & Size of the graph                                                                                       &\cite{Lynch1996d}\\\cline{4-6}
             &                                       &                                                                                                  & \multirow{2}{*}{$\Omega(\log \Delta)$} & \multirow{1}{*}{Degrees of neighbours}                                            & \cite{Wattenhofer2007l}\\\cline{5-6}
            &                                        &                                                    &                                                                                                                 &  Maximum degree of the graph                                                           & \cite{Johansson1999s}\\ \hline \hline

 \multirow{5}{*}{Greedy}        & \multirow{1}{*}{Det.} & \multirow{1}{*}{$O(\Delta^2+\log^*n)$}  &  \multirow{1}{*}{$\Omega(\log n)$} &  {\parbox{60mm}{Unique IDs, maximum degree of the graph, own degree and distinguishable channels}} & \cite{Panconesi2001s}\\ \cline{2-6}
                                                 &\multirow{4}{*}{Rand.}& \multirow{1}{*}{$O(\Delta^2\log n)$}       & \multirow{1}{*}{$\Omega(\log \Delta)$} &  Own degree and degrees of neighbours                                            & \cite{Hansen2004d} \\ \cline{3-6}                                                        
                                                 &                                      & \multirow{3}{*}{$O(\Delta+\log n)$}           & \multirow{1}{*}{$\Omega(\log \Delta)$} & Maximum degree of the graph                                                            & \cite{Gavoille2009o} \\  \cline{4-6}
                                                 
%            &  &   &      \multirow{2}{*}{$\Omega(\log \chi)$}            &  \multirow{1}{*}{Distinguishable channels}                                                     & \cite{Metivier2010a}\\ \cline{5-6}
            &  &   &      \multirow{2}{*}{\parbox{20mm}{$O(\log \mu)$\\ (where $\mu =$ max colour used)}}            &  \multirow{1}{*}{Distinguishable channels}                                                     & \cite{Metivier2010a}\\ \cline{5-6}
                                                &                                         &                                                                      &                                                                & \textbf{None}                                                                                    &\textbf{This paper}\\ \hline   
    \end{tabular}}
\end{table*}

A randomised distributed $(\Delta+1)$-colouring algorithm requiring $O(\log \Delta+\sqrt{\log n})$ time
is proposed in~\cite{Schneider2010a}. In this algorithm 
the messages represent randomised preference levels for each of the possible colours. 
This algorithm needs to know an upper bound of the size of the graph
and requires each processor to be able to send different messages along different channels. 
Since the messages exchanged represent colours, the message size is at least $\Omega(\log \Delta)$.

A randomised algorithm with expected time complexity of 
$O(\log \Delta+2^{O(\sqrt{\log \log n})})$ 
is given in~\cite{Barenboim2012t}. However, this algorithm relies on a deterministic algorithm to 
complete a partial colouring, and hence requires unique node IDs, and messages with $\Omega(\log n)$ bits.

Johansson proposed and analysed a simple randomised distributed 
$(\Delta + 1)$-colouring algorithm requiring $O(\log n)$ time~\cite{Johansson1999s}. 
The algorithm of Johansson requires that each vertex knows the maximum degree of the graph.
Each message corresponds to a potential colour choice, so the messages in this algorithm 
contain $\Omega(\log \Delta)$ bits. 

These algorithms do not attempt to obtain {\em greedy} colourings, and hence tend to use the 
maximum number, $\Delta+1$, of colours.
For many classes of graphs, a greedy colouring will often use considerably fewer colours,
but computing a greedy colouring with a distributed algorithm is a more challenging problem.
In fact, the problem of computing a 
greedy colouring for a given ordering of the vertices is known to be 
\textbf{P}-complete~\cite{Greenlaw1995l,Gebremedhin2003g}. 

Panconesi and Rizzi proposed a deterministic algorithm for graph colouring 
that attempts to use a small number of colours~\cite{Panconesi2001s}. 
This algorithm was not originally designed to construct a greedy colouring. However, 
it can be easily modified to become a distributed greedy colouring algorithm 
by always choosing the first available colour when assigning a colour. 
In view of this it is described in~\cite{Gavoille2009o} as the first distributed approach to greedy colouring. The number of time steps taken by this algorithm is $O(\Delta^2+\log^*n)$~\cite{Panconesi2001s}.
It is quite a sophisticated algorithm which relies on a preprocessing phase to produce a forest
decomposition of the graph. It assumes that each vertex has a unique identifier, 
and it also requires that each vertex knows its own degree and the maximum degree of the whole graph. 
This algorithm also requires the ability to send different messages to different neighbours simultaneously.
Because identifiers are exchanged the message size of the algorithm is $\Omega(\log n)$.

Hansen et al. proposed a randomised distributed algorithm for graph colouring in~\cite{Hansen2004d}. 
Even though the algorithm is not explicitly described in the original paper as a greedy colouring
algorithm, it is pointed out in~\cite{Gavoille2009o} that the colourings it produces
are actually greedy colourings. The expected number of time steps taken by this algorithm to produce a
colouring is $O(\Delta^2\log n)$~\cite{Hansen2004d}.
However, this algorithm assumes that each vertex knows its degree in the graph, 
and the messages exchanged include these numerical degree values as well as the colour values. 
Hence the size of each message sent is at least $\Omega(\log \Delta)$ bits. 

Gavoille et al. give a detailed theoretical study of distributed greedy colouring~\cite{Gavoille2009o}.
They establish a lower bound for this problem of $\Omega(\log n/\log\log n)$ time steps.
Moreover, they note that an arbitrary $k$-colouring can be converted to a greedy colouring by a 
simple distributed algorithm in $O(k)$ time steps. However, the conversion algorithm 
in~\cite{Gavoille2009o} requires 
each node to know the value of $k$. Combining this approach with the most efficient
$(\Delta + 1)$-colouring algorithms described earlier gives a two-stage algorithm with
an overall expected time complexity of $O(\Delta + \log n)$.

M{\'e}tivier et al. proposed a simple randomised distributed $(\Delta + 1)$-colouring algorithm requiring $O(\Delta+\log n)$ time~\cite{Metivier2010a}. The algorithm proposed by M{\'e}tivier et al., does not assume any global knowledge of the network, but requires each processor to know from which channel it receives each message. 
The algorithm consists of two stages: it first uses randomisation to break the symmetry and
obtains a colouring in $O(\log n)$ time with an unbounded number of colours; 
it then reduces the number of colours used to at most $\Delta+1$ in $O(\Delta+\log n)$ time.
The colours in this second stage are chosen to be the smallest available, so the resulting colouring 
is a greedy colouring (although this is not made explicit).
Each message exchanged in the first stage contains only one bit, 
but each message in the second stage represents a final colour choice.
Since the total number of colours used may be much lower than $\Delta$ in some classes of graphs,
we give an upper bound on the message size of $(\log \mu)$ bits for this algorithm,
where $\mu$ is the maximum colour number used.

\section{Algorithm for MIS Selection}
\label{secMIS}

\begin{table*}[th]
\caption{The algorithm for distributed MIS selection at each node
\label{alg:MISlocal}}{
\centering
\framebox{
\begin{minipage}[c]{0.85\linewidth}
\begin{tabbing}
{\bf Global constants:} \= $p_0$ : lower bound on initial probability value; \\
         \> $f_1,f_2$ : lower and upper bounds on change factor for probability value.\\[0.1cm]
{\bf Local variables:} \> $p$ : local probability value, initialised to some value in $[p_0,1]$;\\
         \> $f$ : change factor for probability value, chosen arbitrarily in $[f_1,f_2]$;\\
         \> {\sc Trying} : Boolean flag, initialised to {\sc False}.
\end{tabbing}
\algsetup{
linenosize=\tiny,
linenodelimiter=.,
indent = 3em
}
\renewcommand{\algorithmiccomment}[1]{\hfill {\sc \small #1}}
\begin{algorithmic}[1]
% \STATE Set local value of $p \in [p_0,1]$;
\WHILE{active, in each round}
\vspace*{0.2cm}
\STATE *FIRST EXCHANGE*
% \STATE Set {\sc Attempt} = {\sc False}
\STATE With probability $p$, 
set {\sc Trying} $\leftarrow$ {\sc True} and {\bf send} signal to all neighbours;
\STATE {\bf Receive} any signals sent by neighbours;
\STATE Set $f$ to some value in the interval $[f_1,f_2]$;
\IF{any signal was received}
   \STATE {\sc Trying} $\leftarrow$ {\sc False} and $p\leftarrow p/f$ (decrease $p$)  
\ELSE
   \STATE $p\leftarrow \min\{fp,1\}$ (increase $p$)
\ENDIF

\vspace*{0.2cm}
\STATE *SECOND EXCHANGE*
\IF{{\sc Trying}}
    \STATE {\bf Send} signal to all neighbours;
    \STATE Join the MIS and terminate (become inactive).
\ENDIF
\STATE {\bf Receive} any signals sent by neighbours;
\IF{any signal was received}
  \STATE Terminate (become inactive)
\ENDIF
\ENDWHILE
\vspace*{0.2cm}
\end{algorithmic}
\end{minipage}
}
}
\end{table*}

The distributed algorithm for MIS selection proposed by Afek \textit{et al.} 
is remarkably simple~\cite{Afek2011a}.
At each step, each node may choose, with a certain probability $p$ (that varies over time),
to signal to all its neighbours that it wishes to join the independent set.
If a node chooses to issue this signal, and none of its neighbours choose to do so in the same time step,
then it successfully joins the independent set, and becomes inactive, 
along with all its immediate neighbours.
However if any of these neighbouring nodes issue the same signal at the same time step, 
then the node does not succeed in joining the independent set
at that step. This process is repeated until all nodes become inactive.

Our new algorithm uses a similar basic scheme, but with a different approach to
the way that the probability value $p$
varies over time (see Table~\ref{alg:MISlocal}).
Inspired by the positive feedback mechanisms that control cellular processes~\cite{Bray2006n,Collier1996p},
we give each node an independently updated probability value. 
These probabilities are initialised to arbitrary values 
(above some fixed threshold value, $p_0 > 0$).
They are decreased whenever one or more neighbouring nodes signal that they 
wish to join the independent set, and are increased 
whenever no neighbouring node issues such a signal.
We allow each increase or decrease to be by some arbitrary factor $f$, which may vary at each step,
but is bounded by the global parameters $f_1$ and $f_2$ (with $1 < f_1 \leq f_2$).

Our main result below shows that varying the probabilities in this way, 
using a simple local feedback mechanism,
gives an algorithm whose expected time to compute a maximal independent set is $O(\log n)$ (see Corollary~\ref{logexpected}, below).
We also show that the expected number of signals sent by each node is 
bounded by a constant (see Theorem~\ref{thm:expectedbeeps}, below).

Note that the algorithm in Table~\ref{alg:MISlocal} consists of two successive message exchanges.
We shall refer to each such pair of message exchanges as a {\em round} of the algorithm.
Hence each round occupies two consecutive time steps.

To investigate the performance of our new algorithm in practice we 
constructed an implementation with the probability $p$
at each node varying as follows: $p$ is initially set to ${1}/{2}$. 
In any round where a signal is received from at least one neighbouring cell the value of $p$ is halved. 
In all other rounds it is doubled (up to a maximum of $1$).
We then compared this algorithm with the algorithm in~\cite{Afek2011b} by running both of them on random networks with different numbers of nodes, 
where each edge is present with probability ${1}/{2}$~\cite{Scott2013f}.

We found that the mean number of rounds required in our experiments to complete our algorithm 
and choose a maximal independent set in these networks was approximately $2.5 \log_2 n$,
for all values of $n$ between 20 and 200. However, the mean number of rounds required by the algorithm in~\cite{Afek2011b} to select a maximal independent set was close to the exact value of $\log_2^2 n$.
Afek et al. do not discuss the expected number of signals broadcast at each node in their algorithm. 
We found that the mean number of signals sent by each node in our algorithm was less than 2, 
regardless of the size of the network. However, our experiments indicated that the mean number 
of signals sent by each node when running the algorithm described in~\cite{Afek2011b} 
increased with the size of the network.

Before we analyse the performance of this algorithm 
we first demonstrate in Section~\ref{sec:MISgloballower}
that the use of a feedback mechanism to adjust the probability values,
as described in lines 5-9 of Table~\ref{alg:MISlocal}, is crucial
to achieving the efficiency.

\subsection{Lower Bound for Globally Chosen Probability Values}
\label{sec:MISgloballower}

In this section we consider a class of algorithms similar to the one described in~\cite{Afek2011a}
where each node runs through the same fixed preset sequence of probability values,
and does not adjust these to take into account the behaviour of other nodes.
In other words, we consider a simplified version of the algorithm described in 
Table~\ref{alg:MISlocal}, where the probability values at all nodes are initialised to 
the same value $p_0$,
and the probability updates described in lines 5-9 are replaced
by a simple update rule that changes $p$ to the next value in some fixed sequence $p_1,p_2,\dots$.
We refer to this modified algorithm as 
\emph{MIS selection with global probability values}.

Our first result constructs an explicit family of graphs with $O(n)$ vertices, for which
{\em any} such algorithm takes at least $\Omega(\log^2 n)$ rounds,
no matter what sequence of probability values is used.
(Note that we generally omit floors and ceilings for clarity, and the graphs we construct
in this result have $O(n)$ vertices rather than exactly $n$ vertices, to simplify their
description.)
\begin{theorem}
There is a constant $\kappa>0$ such that the following holds.
Let $G$ be the graph consisting of $n^{1/3}$ disjoint copies of the complete graph $K_d$, 
for each $d=1,\dots,n^{1/3}$.
Then with high probability,
any MIS selection algorithm with global probability values 
running on $G$ does not terminate within $\kappa \log^2 n$ rounds.
\end{theorem}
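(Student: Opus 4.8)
The plan is to reduce everything to the behaviour of the algorithm on a single clique, and then to exploit the fact that one global probability value can only resolve ``one scale'' of clique sizes efficiently per round. \emph{First} I would work out what happens to a single copy of $K_d$ in one round. Since all $d$ vertices are mutually adjacent and every still-active vertex uses the common value $p_t$ in round $t$, a vertex retains \textsc{Trying}$=$\textsc{True} after the first exchange exactly when it is the \emph{unique} signaller; it then joins the MIS in the second exchange and silences the rest. Hence the clique is resolved in a round if and only if exactly one of its $d$ vertices signals, which happens with probability $g_d(p_t) := d\,p_t(1-p_t)^{d-1}$, independently of earlier rounds. Because the global sequence $p_1,p_2,\dots$ is fixed in advance (it does not react to the coin flips) and distinct copies use independent coins, a fixed copy of $K_d$ is still active after $T$ rounds with probability exactly $S_d := \prod_{t=1}^{T}(1-g_d(p_t))$, and the $n^{1/3}$ copies of $K_d$ survive independently.

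\emph{Next} I would set up the probabilistic goal. The algorithm terminates only when every copy of every clique is resolved, so it suffices to exhibit a single size $d^*$ for which, with high probability, at least one of the $n^{1/3}$ copies of $K_{d^*}$ is still active after $T=\kappa\log^2 n$ rounds. As the number of surviving copies is $\mathrm{Bin}(n^{1/3},S_{d^*})$, the probability that none survive is at most $\exp(-n^{1/3}S_{d^*})$. Thus it is enough to find $d^*$ with $n^{1/3}S_{d^*}\to\infty$, i.e.\ $S_{d^*}\ge n^{-1/3+\delta}$ for some fixed $\delta>0$.

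\emph{The core estimate} is where the real work lies. Writing $G_d:=\sum_{t=1}^{T}g_d(p_t)$ for the total resolution rate of size $d$, and using $1-x\ge 4^{-x}=\exp(-(\ln 4)x)$ on $[0,\tfrac12]$ together with the bound $g_d(p)\le\tfrac12$ for $d\ge 2$, I get $S_d\ge\exp(-(\ln 4)G_d)$, so $n^{1/3}S_{d^*}\ge\exp\!\big(\tfrac13\log n-(\ln 4)G_{d^*}\big)$, which tends to infinity provided $G_{d^*}\le c\log n$ for a suitable constant $c<\tfrac{1}{3\ln 4}$. The identity that produces such a $d^*$ is $\sum_{d=1}^{n^{1/3}} g_d(p)/d = p\sum_{d=1}^{n^{1/3}}(1-p)^{d-1}=1-(1-p)^{n^{1/3}}\le 1$, valid for \emph{every} $p\in[0,1]$. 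Summing over the $T$ rounds and swapping the order of summation gives $\sum_{d=2}^{n^{1/3}} G_d/d\le T$; since $\sum_{d=2}^{n^{1/3}} 1/d\ge\tfrac14\log n$ for large $n$, a weighted-averaging (pigeonhole) step yields some $d^*\ge 2$ with $G_{d^*}\le 4T/\log n\le 4\kappa\log n$. Taking $\kappa$ a small enough absolute constant then forces $n^{1/3}S_{d^*}\to\infty$, and the Binomial tail bound finishes the proof.

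\emph{The main obstacle} to discover is the weighting by $1/d$ in the identity above: it is precisely this weighting that makes each round's contribution bounded by $1$ \emph{uniformly in $p$}, encoding the intuition that a single global probability can only resolve one multiplicative scale of clique sizes per round, while the sizes $1,\dots,n^{1/3}$ span $\Theta(\log n)$ scales and each scale needs $\Theta(\log n)$ rounds to eliminate all $n^{1/3}$ of its copies -- giving the $\Omega(\log^2 n)$ bound. Once this identity is in hand, the remaining verifications (that the averaged index satisfies $d^*\ge 2$ so that $g_{d^*}\le\tfrac12$ applies, and that the constants leave a positive exponent $\delta$) are routine, so the only genuinely new idea required is the choice of weights.
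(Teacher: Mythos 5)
Your proposal is correct and follows essentially the same route as the paper: a per-clique survival bound of the form $\prod_t(1-g_d(p_t))\ge\exp(-c\sum_t g_d(p_t))$ combined with a $1/d$-weighted averaging over clique sizes $d$ (the paper implements this weighting as a random choice of $d$ with $\Prob[d=j]\propto 1/(j\log n)$ and bounds the surrogate $dp\,e^{-dp}$, which is exactly your pigeonhole on $\sum_d G_d/d\le T$). Your use of the exact geometric identity $\sum_d g_d(p)/d=1-(1-p)^{n^{1/3}}\le 1$ and the bound $g_d(p)\le\tfrac12$ for $d\ge 2$ is a clean streamlining of the paper's estimates, but not a genuinely different argument.
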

\begin{proof}
Let $p_0,p_1,p_2,\ldots,$ be the sequence of probability values used by the algorithm.  
Fix $d$, and consider a copy $K$ of $K_d$.  The probability that some vertex of $K$ is added to the
independent set at the $i$th round is the probability that exactly one vertex of $K$ beeps, 
and so equals
\begin{equation}
\label{eq:problimit}
dp_i(1-p_i)^{d-1} \le dp_i\exp(-(d-1)p_i).
\end{equation}
Note that the function $xe^{-x}$ is bounded on $[0,\infty)$, and has maximum $1/e$ (at $x=1$).  
So for $d>2$,
$$
dp_i\exp(-(d-1)p_i) = \frac{d}{d-1} \cdot  (d-1)p_i\exp(-(d-1)p_i)\le\frac{3}{2e}.
$$
Also, for $x\in[0,3/2e]$, we have $1-x\ge\exp(-2x)$.
So, by inequality~\eqref{eq:problimit}, the probability that all the vertices of $K$ are still active
after $T$ rounds is at least
\begin{align*}
\prod_{i=1}^T \big(1-dp_i e^{-(d-1)p_i}\big)
&\ge \prod_{i=1}^T \exp(-2dp_i e^{-(d-1)p_i}) \\
&= \exp(-\sum_{i=1}^T 2dp_i e^{-(d-1)p_i}) \\
&\ge \exp(-\sum_{i=1}^T 6dp_i e^{-dp_i}).
\end{align*}
The last inequality follows from the fact that $e^{p_i} \leq e \leq 3$.

Hence if $\sum_{i=1}^T 6dp_i e^{-dp_i}<\frac 14\log n$ then the nodes of $K$ remain active 
with probability at least $n^{-1/4}$.
In that case the probability that the nodes in all the copies of $K_d$ become inactive 
in $T$ rounds is at most
$$(1-n^{-1/4})^{n^{1/3}} \le \exp (-n^{1/12}),$$
and so with high probability the algorithm fails to terminate in $T$ rounds.

It follows that we need only consider the case when
$$\sum_{i=1}^T 6dp_i e^{-dp_i} > \frac 14\log n$$ 
for every choice of $d\ge3$.  We will show that this implies $T=\Omega(\log^2 n)$.

Let us choose $d$ at random.  We define a probability distribution for $d$ by
$$\Prob[d=j]= \frac{c}{j\log n},$$
for $j=3,\ldots,n^{1/3}$ (where $c$ is a normalizing constant: note that
$c=\Theta(1)$, as $\sum_{i=1}^{n^{1/3}}1/j=\Theta(\log n)$).
Then, for any $p\in[0,1]$,
$$
\E[dpe^{-dp}]
= \sum_{j=3}^{n^{1/3}} \frac{c}{j\log n} jpe^{-jp}
\le \frac{c}{\log n} \sum_{j=0}^\infty pe^{-jp}.
$$
But $\sum_{j=0}^\infty pe^{-jp} = p/(1-e^{-p})<2$, as $p\in[0,1]$; 
so we have $\E[dpe^{-dp}]< 2c/\log n$.  
By linearity of expectation,
choosing a random $d$, we have 
$$\E\left[\sum_{i=1}^T 6dp_i e^{-dp_i}\right] < 12cT/\log n.$$  
Hence there is some value of $d$ for which
$$\sum_{i=1}^T 6dp_i e^{-dp_i}< 12cT/\log n.$$ 
By the argument above, this quantity must be at least
$\frac 14\log n$, and so we must have $T=\Omega(\log^2 n)$.
\hfill~
\qed
\end{proof}

\subsection{Time Complexity with Locally Chosen Probability Values and Feedback}
\label{sec:localtime}

In this section we analyse the running time of our new algorithm for distributed MIS selection
described in Table~\ref{alg:MISlocal}, where the probability values at each node 
are locally varied in each round based on feedback from neighbouring nodes.

It follows from the analysis of~\cite{Afek2011a} that if this algorithm terminates 
(i.e., all nodes become inactive) then it correctly identifies an MIS.
The only question is the number of rounds required.

Note that, unlike Luby's algorithm \cite{Alon1986a,Luby1986a}, it is not true that in 
every round we can expect at least some constant fraction of the edges to be incident 
to nodes that become inactive in that round. 
For example, in a complete graph nodes will only become inactive when exactly one node signals. 
If all nodes are initialised with the same probability value and with the same (fixed) 
increase and decrease factor $f$, then all nodes will always possess the same probability value $p_t$.
Whenever more than one node signals, all nodes will decrease their probabilities by $f$; 
if no node signals, all nodes will increase their probabilities by $f$. 
The probability of exactly one node signalling is thus $np_t(1-p_t)^{n-1}$ at each round $t$.
Hence, for complete graphs, 
with high probability all nodes will remain active for any fixed constant number of rounds.
It follows that we must carry out a more detailed analysis over a sequence of rounds
whose length increases with $n$.

\begin{theorem}\label{logrounds}
For any fixed values of $p_0 > 0$, and $1 < f_1 \leq f_2$, 
there is a constant $K_0$ such that the following holds:
For any graph $G$ with $n$ vertices, and any $k\ge1$,
the algorithm defined in Table~\ref{alg:MISlocal} terminates in at most $K_0 (k+1) \log n$  rounds,
with probability at least $1-O(1/n^k)$.
\end{theorem}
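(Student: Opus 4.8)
The plan is to bound the probability that a single fixed vertex is still active after $T=K_0(k+1)\log n$ rounds by $O(n^{-(k+1)})$, and then take a union bound over all $n$ vertices to obtain the claimed failure probability $O(n^{-k})$. The first reduction is to observe (using the correctness analysis inherited from~\cite{Afek2011a}) that a vertex $v$ remains active after $T$ rounds if and only if no vertex in its closed neighbourhood $N[v]=\{v\}\cup\Gamma(v)$ has beeped alone, i.e.\ joined the MIS, in any of the first $T$ rounds: $v$ is removed precisely when it beeps alone, or when some neighbour beeps alone. So it suffices to show that, with the stated probability, some vertex of $N[v]$ succeeds within $T$ rounds.

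To turn this into a tail bound I would work round by round in the natural filtration $\mathcal F_t$ generated by the coin flips up to round $t$. Writing $r_v^{(t)}=\Pr[\text{some }u\in N[v]\text{ beeps alone}\mid\mathcal F_{t-1}]$, the tower rule gives $\Pr[v\text{ active after }T]\le\E\big[\prod_{t=1}^{T}(1-r_v^{(t)})\big]\le\E\big[\exp(-\sum_{t=1}^{T}r_v^{(t)})\big]$, so everything reduces to showing that the accumulated success probability $\sum_t r_v^{(t)}$ is at least a constant multiple of $(k+1)\log n$ with overwhelming probability. A convenient clean special case is the contribution of $v$ itself: conditioned on a round in which no neighbour beeps (exactly the rounds in which $v$ increases its probability), $v$ joins the MIS with probability exactly $p_v^{(t)}$, and this value is $\mathcal F_{t-1}$-measurable. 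This conditional-independence observation is what lets accumulated probability be converted into a genuine tail bound.

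The heart of the argument is the dynamical claim that $\sum_t r_v^{(t)}$ does grow linearly in the number of rounds, after a short warm-up. Because every probability starts in $[p_0,1]$ and is multiplied or divided by a factor in $[f_1,f_2]$ each round, the quantity $\log(1/p_u)$ starts at $O(1)$ and moves by $\Theta(1)$ per round, so it takes only $O(\log n)$ rounds for the probabilities in a neighbourhood to reach the scale (roughly $1/\deg$) at which an isolated successful beep occurs with constant probability. The feedback rule then self-regulates the total neighbourhood activity $\sum_{u\in N[v]}p_u^{(t)}$ around a constant: if it is too large, collisions occur, signals are heard, and the probabilities are driven down; if it is too small, no signals are heard and they are driven up. I would show that this keeps a constant fraction of the post-warm-up rounds in a productive regime where $r_v^{(t)}=\Omega(1)$, giving $\sum_t r_v^{(t)}=\Omega(T-O(\log n))$, and hence, for $K_0$ chosen large enough in terms of $p_0,f_1,f_2$ and the constant hidden in $\Omega(1)$, a survival probability below $n^{-(k+1)}$.

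The main obstacle is exactly the coupling that places this algorithm outside Luby's framework: whether any vertex of $N[v]$ beeps alone depends not just on $N[v]$ but on the second neighbourhood, and a vertex's own success requires its entire (possibly large) neighbourhood to be simultaneously silent. Concretely, the single-vertex contribution $p_v^{(t)}$ is adequate only in the regime where $v$'s neighbours are frequently silent, so that $v$'s probability is often large; in the complementary regime, where the neighbours are persistently active and drive $p_v$ down, one must instead argue that this very activity produces a successful neighbour with constant probability, which requires controlling the joint law of the neighbourhood probabilities rather than any single trajectory. Ruling out adversarial correlations in which the neighbourhood sits at the right scale yet never has exactly one successful beeper is the delicate step, and is where the self-regulating feedback has to be used most carefully.
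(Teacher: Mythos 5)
Your overall architecture matches the paper's: fix a vertex $v$, bound its survival probability over $T = K_0(k+1)\log n$ rounds by $O(n^{-(k+1)})$ via the filtration reduction, exploit the fact that $v$'s beep is independent of its neighbours' beeps (so that conditioned on hearing nothing, $v$ joins with probability $p_v^{(t)}$), and union bound over $n$ vertices. But the heart of the theorem --- showing $\sum_t r_v^{(t)} = \Omega((k+1)\log n)$ with overwhelming probability --- is exactly what you defer, and the heuristics you offer for it do not survive scrutiny. The claim that $\log(1/p_u)$ ``moves by $\Theta(1)$ per round, so it takes only $O(\log n)$ rounds'' to reach the productive scale presumes a monotone drift that the dynamics do not provide: whether each $p_u$ rises or falls is itself a random function of the neighbours' beeps, the weights can oscillate indefinitely, and the paper's own complete-graph discussion shows there can be long stretches in which no vertex of $N[v]$ has any appreciable chance of beeping alone. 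Likewise, ``self-regulation around a constant'' does not by itself yield $r_v^{(t)}=\Omega(1)$ on a constant fraction of rounds: a neighbour $u\in\Gamma(v)$ whose own neighbourhood (partly outside $N[v]$) is heavy can sit at the right aggregate scale and still essentially never beep alone --- which is precisely the adversarial correlation you name in your final paragraph without resolving. As written, the proposal is a plan with its central lemma missing.

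The paper closes this gap with machinery absent from your sketch. It classifies neighbours of $v$ as $\lambda$-light or $\lambda$-heavy (Definition~\ref{def:lambdalightMIS}), proves that a set $W$ of light vertices produces a success with probability at least $e^{-\phi\lambda}(1-e^{-\mu_t(W)})$ (Lemma~\ref{light}, via ordering $W$, conditioning on the smallest beeper, and the inequality $\Prob[F_i\mid E_i]\ge\Prob[F_i]$), and then runs a four-way case analysis, events (E1)--(E4): either the light weight exceeds $\alpha$ and a success in $\Gamma(v)$ is likely; or the total weight $\mu_t(\Gamma(v))$ is below $\beta$; or it is large but shrinks by a factor $f_2^{-1/r}$. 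The decisive observation is that every heavy vertex hears a beep with probability at least $1-e^{-\lambda}$, so by Markov's inequality the bad event (E4) has probability at most $1/(16r(r+2))$ at \emph{every} round regardless of history; occurrences of (E4) are then stochastically dominated by a binomial and controlled by a Chernoff bound. A deterministic multiplicative accounting over maximal intervals (Claim~\ref{runclaim}) converts ``few (E1)/(E4) rounds'' into ``few rounds with $\mu_t(\Gamma(v))>f_2\beta$'', whence $v$ hears a beep on at most a $1/(r+2)$ fraction of rounds, and the relation $f_1^r>f_2$ forces $\Omega(K\log n)$ rounds at which $p_v\ge p_0$ and $v$ hears nothing, each giving success probability at least $p_0$. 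To complete your argument you would need substitutes for each of these steps --- in particular the light/heavy split, which is what decouples the second-neighbourhood dependence you worry about, and the interval potential argument, which converts per-round shrinkage into a global bound on the number of high-weight rounds.
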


Before beginning the proof of Theorem \ref{logrounds}, it will be useful to 
define some notation and record a few simple facts.
We will frequently use the well-known inequality
% , which holds for any $\delta \in [0,1]$ 
\begin{equation}\label{exp}
(1-\delta) \le \exp(-\delta).
\end{equation}

We will also use the following inequality, which holds for any $\lambda > 0$ 
and any $\delta \in [0,1-e^{-\lambda}]$ (it holds with equality at the ends of this interval, and 
so holds at all points in between, by convexity).
\begin{equation}\label{exp2}
(1-\delta) \ge \exp(-\delta \lambda/(1-e^{-\lambda})).
\end{equation}

%For any given $f>1$, to make inequality~(\ref{exp}) hold for any $\delta \in [0,1-\varepsilon]$, we get the value of $c\geq f\ln(f/(f-1))$. For convenience, we fix $c=f\ln(f/(f-1))$.

Finally, we will also need the following Chernoff-type inequality: 
if $X$ is a sum of Bernoulli random variables,
with expected value $\E X=m$, then for every $\delta>0$,
$$\Prob[X>m+\delta]\le\exp(-\delta^2/(2m+2\delta/3)).$$
In particular,
\begin{equation}\label{bigup}
\Prob[X>2m]\le\exp(-m/3).
\end{equation}

We refer to sending a signal in the first exchange of the algorithm in Table~\ref{alg:MISlocal}
as ``beeping", and receiving such a signal from a neighbour as ``hearing a beep". 

For any vertex $v$, we define $\mu_t(v)$, which we call the ``weight'' of $v$, 
to be the probability that $v$ beeps in round $t$. 
(By convention, we set $\mu_t(v)=0$ if $v$ is inactive at time $t$; this
simplifies notation, while allowing us to ignore the contribution of inactive vertices.)
For any $W \subseteq V$ we write $\mu_t(W)$ for $\sum_{v \in W} \mu_t(v)$.
Note that $\mu_t$ is a random measure on $V$, 
as it depends on the beeps of other vertices during the first $t-1$ rounds.

Recall that the set of vertices adjacent to a given vertex $v$ is called the set of
{\em neighbours} of $v$, and denoted by $\Gamma(v)$. 

\begin{definition}\label{def:lambdalightMIS}
For any $\lambda > 0$, a vertex $v$ will be called {\em $\lambda$-light} in round $t$ if
$\mu_t(\Gamma(v))\le\lambda$ and every neighbour of $v$ has weight at most $1-\exp(-\lambda)$;
otherwise, vertex $v$ is called {\em $\lambda$-heavy}.
\end{definition}

For any vertex that is {\em $\lambda$-light}, 
the weight of each of its neighbours individually is bounded by $1-\exp(-\lambda)$ 
and the sum of all its neighbours' weights is not too large 
(and so the vertex is not too likely to hear a beep at time $t$). 
Note that a fixed vertex may move back and forth between being $\lambda$-heavy and 
$\lambda$-light over time.

Our first result establishes a lower bound on the probability that at least one vertex in a set
of $\lambda$-light vertices will be added to the independent set in the current round.

\begin{lemma}\label{light}
Let $W$ be a set of vertices that are $\lambda$-light at round $t$. 
The probability that at least one vertex in $W$ is added to the independent set in round $t$ 
is at least $e^{-\phi\lambda}(1-e^{-\mu_t(W)})$ where $\phi = \lambda/(1-\exp(-\lambda))$.
\end{lemma}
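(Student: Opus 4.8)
The plan is to first reduce the statement to a bound on the success probability of a single vertex, and then to combine these single-vertex bounds over $W$ using an ordering trick that circumvents the dependencies between nearby vertices.

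First I would observe that, by the definition of the algorithm in Table~\ref{alg:MISlocal}, a vertex $v$ is added to the independent set in round $t$ precisely when $v$ beeps in the first exchange and none of its neighbours does; since the beeps are independent across vertices, this event has probability $\mu_t(v)\prod_{u\in\Gamma(v)}(1-\mu_t(u))$. When $v$ is $\lambda$-light, every neighbour $u$ satisfies $\mu_t(u)\le 1-e^{-\lambda}$, so inequality~\eqref{exp2} (applied with $\delta=\mu_t(u)$) gives $1-\mu_t(u)\ge\exp(-\phi\,\mu_t(u))$. Multiplying over $u\in\Gamma(v)$ and using $\mu_t(\Gamma(v))\le\lambda$ then yields $\prod_{u\in\Gamma(v)}(1-\mu_t(u))\ge e^{-\phi\lambda}$, so each $\lambda$-light $v$ succeeds with probability at least $e^{-\phi\lambda}\mu_t(v)$.

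The main obstacle is that the events ``$v$ succeeds'', for $v\in W$, are neither independent nor even compatible: two adjacent successful vertices are impossible and distinct vertices of $W$ may share neighbours, so a naive union bound only gives an upper estimate. To obtain a lower bound I would fix an arbitrary ordering $v_1,\dots,v_m$ of $W$ and consider the events $F_i=\{v_1,\dots,v_{i-1}\text{ are all silent and }v_i\text{ succeeds}\}$. Because $F_i$ forces $v_i$ to beep while every $F_j$ with $j>i$ forces $v_i$ to be silent, the $F_i$ are pairwise disjoint, and each is contained in the target event; hence the probability that some vertex of $W$ succeeds is at least $\sum_{i=1}^m\Prob[F_i]$.

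Finally I would evaluate each term. The event $F_i$ requires $v_i$ to beep and every vertex of $\{v_1,\dots,v_{i-1}\}\cup\Gamma(v_i)$ to be silent, so by independence $\Prob[F_i]=\mu_t(v_i)\prod_{u\in\{v_1,\dots,v_{i-1}\}\cup\Gamma(v_i)}(1-\mu_t(u))$. Splitting off the factors indexed by $\{v_1,\dots,v_{i-1}\}$ and noting that the remaining factors run over a subset of $\Gamma(v_i)$, whose product is still at least $e^{-\phi\lambda}$, gives $\Prob[F_i]\ge e^{-\phi\lambda}\,\mu_t(v_i)\prod_{j<i}(1-\mu_t(v_j))$. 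Summing over $i$ and using the telescoping identity $\sum_{i=1}^m\mu_t(v_i)\prod_{j<i}(1-\mu_t(v_j))=1-\prod_{i=1}^m(1-\mu_t(v_i))$, together with inequality~\eqref{exp} (which gives $\prod_i(1-\mu_t(v_i))\le e^{-\mu_t(W)}$), yields the claimed bound $e^{-\phi\lambda}(1-e^{-\mu_t(W)})$. The only delicate point is the bookkeeping in this split: one must check that the silence conditions on vertices of $W$ are exactly absorbed into the telescoping product, so that the surviving neighbour product stays bounded below by $e^{-\phi\lambda}$ even when vertices of $W$ are adjacent or share neighbours.
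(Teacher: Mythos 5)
Your proof is correct and is essentially the paper's own argument: both fix an ordering of $W$, decompose the success event into the disjoint events ``$v_i$ beeps, $v_1,\dots,v_{i-1}$ are silent, and no neighbour of $v_i$ beeps'', bound the neighbour product below by $e^{-\phi\lambda}$ via inequality~\eqref{exp2} and $\lambda$-lightness, and finish with inequality~\eqref{exp} applied to the probability that some vertex of $W$ beeps. The only cosmetic difference is that the paper phrases the overlap bookkeeping as the conditional-probability inequality $\Prob[F_i \mid E_i] \ge \Prob[F_i]$, whereas you compute $\Prob[F_i]$ exactly over the union $\{v_1,\dots,v_{i-1}\}\cup\Gamma(v_i)$ and invoke the telescoping identity, which makes the same cancellation explicit.
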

\begin{proof}
Let us order the vertices of $W$ as $w_1,\ldots,w_m$, where $m = |W|$.  
The probability that some vertex of $W$ is added to the independent set in round $t$ 
is at least the probability that the smallest vertex of $W$ that beeps in round $t$ 
is added to the independent set.
For $i=1,\dots,m$, define events $E_i$ and $F_i$ by
\[
E_i=(\mbox{$w_i$ beeps; $w_1,\dots,w_{i-1}$ do not beep})
\]
\[
F_i=(\mbox{no neighbour of $w_i$ beeps}).
\]
The events $E_i\cap F_i$ are pairwise disjoint, so using the definition of conditional probability,
we have that the probability that the smallest of $W$ that beeps 
is added to the independent set is
$$
\Prob\left[\bigcup_{i=1}^m(E_i\cap F_i)\right]
=\sum_{i=1}^m \Prob[E_i\cap F_i]
=\sum_{i=1}^m  \Prob[E_i]\Prob[F_i |E_i].
$$
It is easily seen that $\Prob[F_i|E_i]\ge \Prob[F_i]$ since $\Prob[F_i|E_i]$ is conditioned on the
event that $w_i$ beeps and $w_1,\dots,w_{i-1}$ do not. 
Hence we have 
$$\Prob[F_i |E_i] \ge \Prob[F_i] = \prod_{v\in \Gamma(w_i)}(1-\mu_t(v))$$
Since $w_i$ is $\lambda$-light, we may apply Inequality~\eqref{exp2}, to conclude that 
\begin{align*}
\prod_{v\in \Gamma(w_i)}(1-\mu_t(v)) 
& \ge \prod_{v\in \Gamma(w_i)}\exp(-\phi\mu_t(v))\\ 
& = \exp(-\phi \mu_t(\Gamma(w_i)))\\
& \ge \exp(-\phi\lambda)
\end{align*}
where $\phi = \lambda/(1-\exp(-\lambda))$.
Hence we have
$$
\sum_{i=1}^m  \Prob[E_i]\Prob[F_i |E_i]\ge \exp(-\phi\lambda)\sum_{i=1}^m \Prob[E_i].
$$
But $\sum_{i=1}^m \Prob[E_i]$ is simply the probability that some vertex in $W$ beeps,
which is given by $1-\prod_{v \in W}(1-\mu_t(v))$.
Using Inequality~\eqref{exp} this value is at least $1-\exp(-\mu_t(W))$.  

Thus the probability that some vertex of $W$ is added to the independent set in round $t$ 
is at least
$$\exp(-\phi\lambda)\sum_{i=1}^m \Prob[E_i]\ge e^{-\phi\lambda}(1-e^{-\mu_t(W)}).$$
% where $\phi = \lambda/(1-\exp(-\lambda))$.
\hfill~
\qed
\end{proof}

\begin{proof}[of Theorem \ref{logrounds}]
Fix an arbitrary vertex $v$.
We shall show that, with failure probability $O(1/n^{k+1})$,
$v$ becomes inactive within $K_0 (k+1)\log n$ rounds, for a suitable choice of constant $K_0$.
Taking a union bound over all $n$ choices of $v$, it follows that with failure probability $O(1/n^k)$
every vertex becomes inactive and the algorithm terminates
within $K_0 (k+1)\log n$ rounds, which proves the theorem.

At each time step $t\ge1$, we partition the neighbourhood of $v$ into $\lambda$-light 
and $\lambda$-heavy vertices, for a suitable fixed choice of $\lambda$
\begin{align*}
L_t=L_t(v)=&\{x\in\Gamma(v) \mid x\ \text{is $\lambda$-light at step $t$}\} \\
H_t=H_t(v)=&\{x\in\Gamma(v) \mid x\ \text{is $\lambda$-heavy at step $t$}\}.
\end{align*}

We will follow the behaviour of $\mu_t(L_t)$ and $\mu_t(H_t)$ over time.

The idea of the argument is roughly as follows:  if $\mu_t(L_t)$
is large at many rounds, then by Lemma~\ref{light} it is very likely that some neighbour of $v$ will
be added to the independent set on one of these occasions, leading to $v$ becoming inactive.
If this does not happen, then  $\mu_t(L_t)$ must be small most of the time.
Now consider $H_t$.  Vertices that are $\lambda$-heavy at time $t$ are likely to hear beeps
and so drop in weight (as their signalling probability is reduced);
it will follow that with high probability $\mu_{t+1}(H_{t})$ is a constant factor smaller 
than $\mu_t(H_t)$ most of the time.
Now we look at the evolution of $\mu_t(\Gamma(v))$, the weight of the whole neighbourhood of $v$.
It may be large and increasing for some small fraction of the time,
but mostly it is either shrinking or else it is already small.  
It will follow that, for at least some fixed fraction of the time, $\mu_t(\Gamma(v))$ is small.
But this implies that, for at least some fixed fraction of the time, $v$ will not hear any beeps,
and hence $\mu_t(v)$ will be large for some fixed fraction of the time.
This implies that it is very likely that at some point in the sequence of rounds we are considering 
$v$ will beep and not hear any beeps, and so get added to the independent set.

To make this argument precise, we now define the following constants:
\begin{align*}
\label{constants}
r      & = 1 + (\log{f_2}/\log{f_1});\\
\lambda& = \log(32r(r+2)(f_2-f_1^{-1})/(f_2^{-1/r}-f_1^{-1}));\\
\phi   & = \lambda/(1-\exp(-\lambda));\\
\beta  & = 1/(4(r+2)\phi f_2);\\
\alpha & =(\beta/2)(f_2^{-1/r}-f_1^{-1})/(f_2-f_1^{-1});\\
K_0    & = (8r(r+2)) \max\{6,1/p_0,1/\log f_2,1/(e^{-\phi\lambda}(1-e^{-\alpha}))\};
% K      & = K_0 (k+1).
\end{align*}
The values of these constants depend only on the fixed parameters $f_1$ and $f_2$ which bound the probability update factor $f$ used in the algorithm,
and on the initial minimum probability threshold $p_0$ (see Table~\ref{alg:MISlocal}). 
Note that $1 < f_1 \leq f_2$, so $r \geq 2$ and $\lambda > \log 256 > 5$.

To simplify the presentation, we also define
\begin{align*}
K      & = K_0 (k+1).
\end{align*}

At each round $t$, we consider the following four possible events:
\begin{enumerate}
\item[(E1)]  $\mu_t(L_t)\ge \alpha$

[`$\Gamma(v)$ has a significant weight of light neighbours']
\item[(E2)]  $\mu_t(L_t)< \alpha$ and $\mu_t(\Gamma(v))\le \beta$

[`$v$ is very light']
\item[(E3)]  $\mu_t(L_t)< \alpha$, $\mu_t(\Gamma(v))> \beta$
and $\mu_{t+1}(\Gamma(v))\le f_2^{-1/r}\mu_t(\Gamma(v))$

[`the neighbourhood of $v$ shrinks significantly in weight during round $t$']
\item[(E4)]  $\mu_t(L_t)< \alpha$, $\mu_t(\Gamma(v))> \beta$
and $\mu_{t+1}(\Gamma(v))>f_2^{-1/r}\mu_t(\Gamma(v))$

[`the neighbourhood of $v$ does not shrink significantly in weight during round $t$ (and may grow)']
\end{enumerate}
Exactly one of these events must occur in each round.  
% Note that we know whether (E1) or (E2) occur at the beginning of the time step; 
% if neither occurs, then we must look at the signals 
% sent by nodes that are at distance 1 or 2 from $v$ 
% to determine which of (E3) and (E4) occurs.

We organize the rest of the proof as a series of claims.

\begin{subclaim}\label{claimE1limit}
With failure probability $O(1/n^{k+1})$, 
(E1) occurs at most $(K\log n)/(8r(r+2))$ times in the first $K\log n$ rounds.
\end{subclaim}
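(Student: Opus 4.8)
The plan is to turn each occurrence of (E1) into an independent chance for $v$ to be deactivated, via Lemma~\ref{light}. In any round $t$ with (E1) we have $\mu_t(L_t)\ge\alpha$, and the vertices of $L_t$ are $\lambda$-light by definition, so applying Lemma~\ref{light} with $W=L_t$ shows that some vertex of $L_t$ is added to the independent set in round $t$ with probability at least $q:=e^{-\phi\lambda}(1-e^{-\alpha})$ (monotonicity of $1-e^{-x}$ lets us replace $\mu_t(L_t)$ by $\alpha$). Since $L_t\subseteq\Gamma(v)$, such an event makes $v$ adjacent to a newly chosen MIS member, so $v$ hears the beep in the second exchange and becomes inactive. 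Thus every round in which (E1) occurs while $v$ is still active is a trial that deactivates $v$ with probability at least $q$, and the whole claim reduces to showing that more than $m:=(K\log n)/(8r(r+2))$ such trials almost surely contain a success.

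The main obstacle is that the (E1)-rounds are not independent: whether (E1) holds at round $t$, and the identity of $L_t$, both depend on the random weights $\mu_t$, which are functions of all beeps during the first $t-1$ rounds. To get around this I would run a stopping-time argument rather than multiply raw probabilities. Let $\tau_1<\tau_2<\cdots$ enumerate the rounds in $\{1,\dots,K\log n\}$ at which $v$ is still active and (E1) occurs. Each $\tau_i$, together with $\mu_{\tau_i}$, $L_{\tau_i}$ and the event (E1) itself, is measurable with respect to the history $\mathcal{F}_{\tau_i-1}$ up to the start of that round, whereas the probability in Lemma~\ref{light} is taken over the beeps of round $\tau_i$ alone. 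Hence, conditioned on $\mathcal{F}_{\tau_i-1}$ and on $\tau_i$ being defined, $v$ is deactivated in round $\tau_i$ with conditional probability at least $q$. Peeling these conditional failure probabilities off one at a time through the tower property then gives that the probability that $\tau_1,\dots,\tau_{m+1}$ are all defined and none deactivates $v$ is at most $(1-q)^{m+1}\le(1-q)^m$.

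It remains to feed in the arithmetic forced by the definition of $K_0$. Because $K_0\ge (8r(r+2))/(e^{-\phi\lambda}(1-e^{-\alpha}))$, we have $q\ge 8r(r+2)/K_0$, and with $m=K_0(k+1)\log n/(8r(r+2))$ this yields $qm\ge(k+1)\log n$. Using the inequality~\eqref{exp}, $(1-q)^m\le\exp(-qm)\le n^{-(k+1)}$. Now if (E1) were to occur more than $m$ times in the first $K\log n$ rounds with $v$ active throughout, then in particular $\tau_1,\dots,\tau_{m+1}$ would all be defined and all would fail to deactivate $v$, an event of probability at most $(1-q)^m\le n^{-(k+1)}$. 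Therefore, with failure probability $O(1/n^{k+1})$, either $v$ has already become inactive (so the overall target for $v$ is met) or (E1) occurs at most $m=(K\log n)/(8r(r+2))$ times, which is exactly the claim.
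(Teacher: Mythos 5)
Your proof is correct and takes essentially the same route as the paper's: it applies Lemma~\ref{light} with $W=L_t$ and $\mu_t(L_t)\ge\alpha$ to obtain a per-occurrence deactivation probability of at least $\phi_1=e^{-\phi\lambda}(1-e^{-\alpha})$, multiplies this across occurrences to get $(1-\phi_1)^{(K\log n)/(8r(r+2))}\le n^{-(k+1)}$ using $K_0\ge 8r(r+2)/\phi_1$, and correctly reads the claim as counting occurrences of (E1) while $v$ is still active. Your stopping-time and tower-property formalisation merely makes rigorous the dependence issue that the paper's one-line product bound leaves implicit (and which the paper addresses explicitly only in Claim~\ref{claimE4limit}, via coupling), so it is a welcome clarification rather than a different approach.
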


Each time that (E1) occurs, it follows from Lemma \ref{light} that 
with probability at least $ e^{-\phi\lambda}(1-e^{-\mu_t(L_t)}) \ge e^{-\phi\lambda}(1-e^{-\alpha})$
some vertex of $L_t$ is added to the independent set 
(and so $v$ becomes inactive and the process at $v$ terminates).
Let $\phi_1= e^{-\phi\lambda}(1-e^{-\alpha})$:
the probability that (E1) occurs
at least
$(K\log n)/(8r(r+2))$ times without $v$ becoming inactive is at most
$(1-\phi_1)^{(K\log n)/(8r(r+2))}$ which is
at most $$\exp(-(\phi_1K_0/(8r(r+2)))(k+1)\log n).$$
By our choice of $K_0$, we have $K_0 \geq (8r(r+2))/\phi_1$,
so this probability is at most $\exp(-(k+1)\log n)=n^{-(k+1)}$.  
This proves Claim~\ref{claimE1limit}.

The bad event for us will be (E4), so let us bound the probability that (E4) occurs.
\begin{subclaim}\label{claimE4prob}
At each round $t$, the probability that (E4) occurs is at most $1/(16r(r+2))$.
\end{subclaim}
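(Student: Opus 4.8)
The plan is to fix the vertex $v$ and condition on the entire history up to the start of round $t$, which determines the weights $\mu_t$ and hence the partition $\Gamma(v)=L_t\cup H_t$. The two conditions $\mu_t(L_t)<\alpha$ and $\mu_t(\Gamma(v))>\beta$ appearing in (E4) are measurable with respect to this history, so if either fails then (E4) cannot occur and there is nothing to prove; I may therefore assume both hold. Write $M=\mu_t(\Gamma(v))$ and $\ell=\mu_t(L_t)$, and abbreviate $A=f_2^{-1/r}-f_1^{-1}$ and $B=f_2-f_1^{-1}$, so that $f_2^{-1/r}=f_1^{-1}+A$ with $A,B>0$ (here $A>0$ comes from $f_2^{1/r}<f_1$, which follows from $r=1+\log f_2/\log f_1$).

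The first step is to show that every $\lambda$-heavy neighbour hears a beep with high probability. If $x\in H_t$ then, by Definition~\ref{def:lambdalightMIS}, either $\mu_t(\Gamma(x))>\lambda$ or some neighbour of $x$ has weight exceeding $1-e^{-\lambda}$. In the first case the probability that $x$ hears no beep is $\prod_{u\in\Gamma(x)}(1-\mu_t(u))\le\exp(-\mu_t(\Gamma(x)))<e^{-\lambda}$ by Inequality~\eqref{exp}; in the second case it is at most $e^{-\lambda}$, since the single heavy neighbour beeps with probability greater than $1-e^{-\lambda}$. Either way, $\Prob[x\text{ hears no beep}]\le e^{-\lambda}$.

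The second step is a deterministic bound on $\mu_{t+1}(\Gamma(v))$. Every vertex grows by at most a factor $f_2$, while a heavy vertex that hears a beep shrinks by at least a factor $f_1$. Letting $S\subseteq H_t$ be the random set of heavy neighbours that hear no beep, I compare each new weight to the baseline $\mu_t(x)/f_1$: heavy vertices that hear a beep contribute no excess, while each vertex in $L_t\cup S$ contributes an excess of at most $(f_2-f_1^{-1})\mu_t(x)=B\mu_t(x)$. This yields
$$\mu_{t+1}(\Gamma(v))\le \frac{M}{f_1}+B\big(\ell+\mu_t(S)\big).$$
Since the threshold is $f_2^{-1/r}M=M/f_1+AM$, event (E4) forces $\ell+\mu_t(S)>(A/B)M$. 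Using $\ell<\alpha=(\beta/2)(A/B)\le(A/2B)M$ (as $M>\beta$), this in turn forces $\mu_t(S)>(A/2B)M$. Finally I bound this tail by Markov's inequality: by the first step $\E[\mu_t(S)]=\sum_{x\in H_t}\mu_t(x)\Prob[x\text{ hears no beep}]\le e^{-\lambda}\mu_t(H_t)\le e^{-\lambda}M$, so
$$\Prob\big[\mu_t(S)>(A/2B)M\big]\le \frac{e^{-\lambda}M}{(A/2B)M}=\frac{2Be^{-\lambda}}{A}=\frac{1}{16r(r+2)},$$
where the last equality substitutes $e^{-\lambda}=A/(32r(r+2)B)$, which is precisely the definition of $\lambda$. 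Averaging over the history gives the claim.

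The main obstacle is not a single hard inequality but the bookkeeping with the rigged constants: $\lambda$ is large enough that heavy vertices almost never escape hearing a beep, $\alpha$ is small enough that the permitted light weight $\ell<\alpha$ cannot by itself lift the neighbourhood weight back above the threshold, and the residual failure probability then collapses to exactly $1/(16r(r+2))$. I expect the most delicate point to be verifying $\alpha\le(A/2B)M$, since this is what reduces (E4) to a pure statement about $\mu_t(S)$; it is worth emphasising that only Markov's inequality is needed here, not the Chernoff bound~\eqref{bigup}.
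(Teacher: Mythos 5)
Your proposal is correct and follows essentially the same route as the paper's proof: the $e^{-\lambda}$ bound on a heavy vertex hearing no beep, the deterministic weight-evolution bound $\mu_{t+1}(\Gamma(v))\le \mu_t(\Gamma(v))/f_1+(f_2-f_1^{-1})(\mu_t(L_t)+\mu_t(S))$ with $S=H_t^0$, and Markov's inequality on $\mu_t(S)$. The only (harmless) difference is organisational: you derive from (E4) the tail event $\mu_t(S)>(A/2B)M$ and apply Markov to it directly using $\E[\mu_t(S)]\le e^{-\lambda}M$, whereas the paper applies Markov at the threshold $16r(r+2)e^{-\lambda}\mu_t(H_t)$ and shows (E4) fails on the complement --- the constants and the roles of $\lambda$, $\alpha$, $\beta$ work out identically.
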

If (E4) can occur, then we must have that $\mu_t(L_t)< \alpha$ and $\mu_t(\Gamma(v))> \beta$. 
For any $x\in H_t$, there are two cases to consider - the first is that the total weight of all its
neighbouring vertices is greater than $\lambda$; the second is that at least one of its neighbouring
vertices individually has weight more than $1-\exp(-\lambda)$.

In the first case, using Inequality~\eqref{exp}, the probability that no neighbour of $x$ beeps 
in round $t$ is at most $\exp(-\mu_t(\Gamma(x)))$ which is bounded by $\exp(-\lambda)$. 
In the second case, the probability that no neighbour of $x$ beeps in round $t$ 
is still at most $\exp(-\lambda)$. 
Thus, for any $x \in H_t$ we have shown that the probability that no neighbour of $x$ beeps 
in round $t$ is at most $\exp(-\lambda)$. 

Let $H_t^0$ be the set of vertices in $H_t$ that do not hear a beep in round $t$,
and let $H_t^1=H_t\setminus H_t^0$ be the remaining vertices in $H_t$ that do hear a beep. Then
$
\E[\mu_t(H_t^0)]\le \exp(-\lambda)\mu_t(H_t),
$
and so by Markov's inequality
\begin{equation}\label{mi}
\Prob\left[\mu_t(H_t^0) \ge 16r(r+2)\exp(-\lambda)\mu_t(H_t)\right]\le 1/(16r(r+2)).
\end{equation}
Now all vertices in $H_t^1$ decrease their weight by a factor of at least $f_1$, 
while vertices in $L_t$ and $H_t^0$ may either decrease or increase their weight 
(additionally, some weights may get set to 0 if vertices become inactive).  So
\begin{align*}
\mu_{t+1}(\Gamma(v))
& \le \frac{1}{f_1}\mu_t(H_t^1) + f_2\mu_t(H_t^0) +f_2\mu_t(L_t)\\
& =\frac1{f_1}\mu_t(\Gamma(v))\\
& ~~~~+ (f_2-\frac1{f_1})\mu_t(H_t^0) +(f_2-\frac1{f_1})\mu_t(L_t)
\end{align*}
It follows from Inequality~\eqref{mi} that, with probability at least $1-1/(16r(r+2))$,
\begin{align*}
\mu_{t+1}(\Gamma(v))
& \le \frac1{f_1}\mu_t(\Gamma(v)) + (f_2-\frac1{f_1}) 16r(r+2)e^{-\lambda}\mu_t(H_t)\\
& \hspace{1cm} +(f_2-\frac1{f_1})\mu_t(L_t)\\
& \le f_2^{-1/r}\mu_t(\Gamma(v)),
\end{align*}
where the final inequality follows from our
choice of $\lambda$, which gives
\begin{align*}
(f_2-f_1^{-1}) & 16r(r+2)e^{-\lambda}\mu_t(H_t)\\
               & \leq (f_2-f_1^{-1}) 16r(r+2)e^{-\lambda}\mu_t(\Gamma(v))\\
               & \leq 1/2(f_2^{-1/r}-f_1^{-1})\mu_t(\Gamma(v)),
\end{align*}
and our choice of $\alpha$ and $\beta$, 
because we are assuming that $\mu_t(L_t)< \alpha$ and $\mu_t(\Gamma(v))> \beta$, so we have 
\begin{align*}
(f_2-f_1^{-1})\mu_t(L_t)  
& < (f_2-f_1^{-1})\alpha \\
& = 1/2(f_2^{-1/r}-f_1^{-1})\beta\\ 
& < 1/2(f_2^{-1/r}-f_1^{-1})\mu_t(\Gamma(v)).
\end{align*}
Thus the probability that (E4) does occur is bounded above by $1/(16r(r+2))$.  
This proves Claim~\ref{claimE4prob}.

\begin{subclaim}\label{claimE4limit}
With failure probability $O(1/n^{k+1})$, (E4) occurs at most $(K\log n)/(8r(r+2))$ times
in the first $K\log n$ rounds.
\end{subclaim}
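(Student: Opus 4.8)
The plan is to treat the occurrences of (E4) across rounds as a sum of indicator variables and apply the Chernoff-type tail bound~\eqref{bigup}. First I would write $X=\sum_{t=1}^{K\log n}X_t$, where $X_t$ is the indicator of the event that (E4) occurs in round $t$, so that the conclusion we want is exactly $\Prob[X>(K\log n)/(8r(r+2))]=O(1/n^{k+1})$.

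The key input is Claim~\ref{claimE4prob}, which bounds the probability of (E4) at each round by $p:=1/(16r(r+2))$. Crucially, that bound is really a bound on the \emph{conditional} probability of (E4) given everything that has happened in the first $t-1$ rounds, since the argument for Claim~\ref{claimE4prob} only uses the weights $\mu_t$ entering round $t$. This is what lets me handle the main obstacle, namely that the $X_t$ are \emph{not} independent: they are determined by the random beeps, and the weights feeding into round $t$ depend on the whole history. Because the conditional probability of success is uniformly at most $p$ whatever the past, $X$ is stochastically dominated by a binomial random variable $Y\sim\mathrm{Bin}(K\log n,p)$ (equivalently, one can check that an exponential of $X$ with the appropriate correction is a supermartingale). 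I would state this domination explicitly, since it is the one place where the dependence between rounds must genuinely be addressed; everything else is bookkeeping.

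Once $X$ is dominated by $Y$, the rest is a short calculation. Here $m:=\E Y=pK\log n=(K\log n)/(16r(r+2))$, and the target threshold $(K\log n)/(8r(r+2))$ is precisely $2m$. Applying~\eqref{bigup} then gives $\Prob[X>2m]\le\Prob[Y>2m]\le\exp(-m/3)=\exp(-(K\log n)/(48r(r+2)))$. Finally I would invoke the choice of $K_0$: since $K_0=(8r(r+2))\max\{6,\dots\}\ge 48r(r+2)$, we have $K=K_0(k+1)\ge 48r(r+2)(k+1)$, so $m/3\ge(k+1)\log n$ and hence the failure probability is at most $\exp(-(k+1)\log n)=n^{-(k+1)}$, as required. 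This parallels the bookkeeping already used to conclude Claim~\ref{claimE1limit}, the only new ingredient being the stochastic-domination step that converts the per-round conditional bound of Claim~\ref{claimE4prob} into a tail bound on the total count.
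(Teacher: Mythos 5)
Your proposal is correct and takes essentially the same route as the paper: both arguments observe that the bound in Claim~\ref{claimE4prob} holds uniformly over the history, couple the occurrences of (E4) with independent Bernoulli trials so that the count is stochastically dominated by a binomial with parameters $K\log n$ and $1/(16r(r+2))$, and then apply inequality~\eqref{bigup} at threshold $2\E X$ together with $K_0\ge 48r(r+2)$ to obtain failure probability $n^{-(k+1)}$. Your only addition is making the conditional nature of the per-round bound and the domination step more explicit, which the paper states somewhat more tersely.
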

At each round, the probability of (E4) depends on the past history of the process.
However, by Claim~\ref{claimE4prob}, it is always at most $1/(16r(r+2))$, and so
we can couple occurrences of (E4) with a sequence of independent events each occurring with probability
$1/(16r(r+2))$.  It follows that the number of occurrences of (E4)
in the first $K\log n$ rounds is
stochastically dominated by a binomial random variable $X$ with parameters $K\log n$ and $1/(16r(r+2))$.
The probability that (E4) occurs more than $(K\log n)/(8r(r+2))$ times is therefore, 
by \eqref{bigup}, at most $$\Prob[X>2\E X] \le \exp(-\E X/3) \le \exp(-(K\log n)/(48r(r+2))).$$
By our choice of $K_0$, we have $K_0 \ge 48r(r+2)$, so this probability is $O(n^{-(k+1)})$, 
which proves Claim~\ref{claimE4limit}.

From Claim~\ref{claimE1limit} and Claim~\ref{claimE4limit},
we conclude that with failure probability $O(n^{-(k+1)})$, (E1) and (E4) altogether occur at most
$K(\log n)/(4r(r+2))$ times in the first $K\log n$ rounds. 
We next show that, with small failure probability, $\mu_t(\Gamma(v))$ is small most of the time.

\begin{subclaim}\label{runclaim}
With failure probability $O(1/n^{k+1})$,
$\mu_t(\Gamma(v))>f_2\beta$ at most $(K\log n)/(2(r+2))$ times 
in the first $K\log n$ rounds.
\end{subclaim}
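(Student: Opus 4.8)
The plan is to reduce the statement to the bounds already obtained on the numbers of (E1) and (E4) rounds, and then to run a deterministic accounting argument on the trajectory of $\mu_t(\Gamma(v))$. First I would observe that an (E2) round can never be \emph{large} (i.e.\ have $\mu_t(\Gamma(v))>f_2\beta$), since (E2) forces $\mu_t(\Gamma(v))\le\beta<f_2\beta$; hence every large round has type (E1), (E3) or (E4). Combining Claim~\ref{claimE1limit} and Claim~\ref{claimE4limit}, with failure probability $O(1/n^{k+1})$ the total number of (E1) and (E4) rounds among the first $K\log n$ rounds is at most $(K\log n)/(4r(r+2))$; from here on I would argue \emph{deterministically}, conditioning on this event, so that no further probability is spent. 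The large (E1) and (E4) rounds are then already controlled, and the crux is to bound the large (E3) rounds.

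For this I would track $\log\mu_t(\Gamma(v))$ and decompose the large rounds into maximal \emph{runs} of consecutive large rounds. The three structural facts I would use are: (i) to pass from a non-large round to a large one the weight must strictly increase, and this can happen only in an (E1) or (E4) round, since an (E2) round keeps the weight at most $f_2\beta$ and an (E3) round shrinks it — so each run is opened by a distinct \emph{non-large} (E1)/(E4) round, bounding the number of runs $R^\ast$ by $1+(K\log n)/(4r(r+2))$; (ii) inside a run only (E1),(E3),(E4) rounds occur, each (E3) round multiplying the weight by at most $f_2^{-1/r}$ (so $\log\mu_t$ falls by at least $(\log f_2)/r$) and each (E1)/(E4) round multiplying it by at most $f_2$ (so $\log\mu_t$ rises by at most $\log f_2$, using that every active weight changes by a factor in $[f_2^{-1},f_2]$); and (iii) at the first round of a run the weight is at most $f_2^2\beta$, so $\log\mu_t$ then exceeds the threshold $\log(f_2\beta)$ by at most $\log f_2$.

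The heart of the argument is a one–dimensional walk estimate inside each run $R$. Writing $b_R$ and $c_R$ for the numbers of (E1)/(E4) and (E3) rounds in $R$, the walk $\log\mu_t$ starts at most $\log f_2$ above the threshold and must remain above the threshold for the entire run; since it is pulled down by (E3) steps of size at least $(\log f_2)/r$ and pushed up by (E1)/(E4) steps of size at most $\log f_2$, remaining positive forces $c_R\le r(1+b_R)$. Summing over runs, the number of large (E3) rounds is at most $rR^\ast+r\sum_R b_R$, so the total number of large rounds is at most $(1+r)\sum_R b_R+rR^\ast$. Now $\sum_R b_R$ (the large (E1)/(E4) rounds) and the opening rounds counted by $R^\ast$ both draw on the single pool of at most $(K\log n)/(4r(r+2))$ (E1)/(E4) rounds, so this total is at most $(1+r)(K\log n)/(4r(r+2))+r$. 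Since $r\ge2$, we have $(1+r)/(2r)<1$, so this is strictly below $(K\log n)/(2(r+2))$; the slack is $\Theta((r-1)K\log n/(r(r+2)))$, which dominates the additive $r$ because $K\ge K_0\ge 48r(r+2)$, giving the claim for all large $n$.

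The step I expect to be the main obstacle is making this accounting tight enough: the target constant $1/(2(r+2))$ is only just large enough, and a naive bound that treats $R^\ast$ and the number of large (E1)/(E4) rounds as two \emph{independent} quantities, each of size $(K\log n)/(4r(r+2))$, overshoots it. The decisive observation is that these two are not independent — every run is opened by its own non-large (E1)/(E4) round, so both are charged against the same pool — and it is exactly the inequality $r\ge2$ (equivalently $f_1\le f_2$) that then makes $(1+r)/(4r)<1/2$ and closes the argument. A second delicate point is the treatment of the ``exit'' rounds, where an (E3) step overshoots the threshold: rather than trying to charge such a step a full $(\log f_2)/r$ of potential, it should be absorbed into the per-run count via the bound $c_R\le r(1+b_R)$, which is what keeps the bookkeeping honest without losing the constant.
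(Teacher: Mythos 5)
Your overall strategy is the same as the paper's: decompose the large rounds (those with $\mu_t(\Gamma(v))>f_2\beta$) into maximal intervals, observe that (E2) can neither occur at nor immediately precede a large round, play the per-step drop $f_2^{-1/r}$ at (E3) rounds against the maximal growth factor $f_2$ at (E1)/(E4) rounds, and charge everything to the budget of at most $(K\log n)/(4r(r+2))$ occurrences of (E1)/(E4) from Claims~\ref{claimE1limit} and~\ref{claimE4limit}. (The paper's bookkeeping is marginally different but arithmetically equivalent: it colours a round in $T_i$ red when (E1)/(E4) occurred at the \emph{previous} round, so your ``opener'' becomes the guaranteed first red round of each interval, and the inequality $\mu_{t_i}(\Gamma(v))\le\mu_{s_i-1}(\Gamma(v))f_2^{\,r_i-b_i/r}$ with $\mu_{s_i-1}(\Gamma(v))\le f_2\beta$ yields $b_i<r\,r_i$ with no additive constant and no need for your shared-pool observation; the constant $(1+r)/(4r)<1/2$ closes the count for $r\ge2$ in both versions.)

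However, there is one genuine error: your fact (iii) --- that every run starts at height at most $\log f_2$ above the threshold --- fails for a run beginning at round $1$. That run has no preceding non-large round to cap it, and $\mu_1(\Gamma(v))$ can be as large as roughly $n$ (the initial probabilities are arbitrary in $[p_0,1]$ and $v$ may have up to $n-1$ neighbours). For this run the walk starts up to $\log n$ above the threshold, so your per-run inequality $c_R\le r(1+b_R)$ becomes $c_R\le r\bigl(\log_{f_2}(n/(f_2\beta))+b_R\bigr)$, and the omitted term is $\Theta(r\log n/\log f_2)$ --- the \emph{same order} as the main term, not the additive constant $r$ you wrote (your remark that the $+r$ is dominated because $K_0\ge 48r(r+2)$ shows this is where the initial run got lost). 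The conclusion survives, but only because of a specific device you did not invoke: the paper notes $\mu_0(\Gamma(v))<n$, carries the resulting extra $r\log n/\log f_2$ blue rounds explicitly, and absorbs them using the component $K_0\ge 8r(r+2)/\log f_2$ built into the definition of $K_0$, which gives $r\log n/\log f_2\le (K\log n)/(8(r+2))$ --- exactly fitting the slack left by $(r+1)/(4r)\le 3/8$. Your proof needs this boundary case added; with it, the argument is correct and matches the paper's.
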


Let $T$ be the set of rounds $t \geq 1$ at which $\mu_t(\Gamma(v)) > f_2\beta$.
We decompose $T$ into (maximal) intervals of integers, say as $T_1\cup \dots \cup T_m$.
Let $T_i=[s_i,t_i]$ be one of these intervals.
We colour each integer $t\in T_i$ {\em red} if (E1) or (E4) occurred at the previous round, 
and {\em blue} if (E3) occurred 
(note that  (E2) cannot occur, as $\mu_{t-1}(\Gamma(v))\ge\mu_t(\Gamma(v))/f_2>\beta$).
By the definition of (E3), we have $\mu_t(\Gamma(v))\le f_2^{-1/r}\mu_{t-1}(\Gamma(v))$ at blue rounds, 
and we have $\mu_t(\Gamma(v))\le f_2\mu_{t-1}(\Gamma(v))$ otherwise.
Let $r_i$ be the number of red elements in $T_i$ and $b_i$ the number of blue elements.
It follows that 
$$\mu_{t_i}(\Gamma(v))\le \mu_{s_i-1}(\Gamma(v)) \cdot {f_2}^{r_i-b_i/r}.$$
Since $\mu_{t_i}(\Gamma(v))>f_2\beta$ it follows that
\begin{equation*}
r_i>\frac{1}{r}b_i + \log_{f_2} f_2\beta - \log_{f_2} (\mu_{s_i-1}(\Gamma(v))).
\end{equation*}
However, $\mu_{s_i-1}(\Gamma(v)) \leq f_2 \beta$ in all cases where $s_i > 1$,
and $\mu_0(\Gamma(v)) < n$. 
Summing over $i$, we see that 
% the total number of red elements in $T$
% is greater than $1/r$ times the total number of blue elements in $T$
% plus $\log_{f_2} f_2\beta$ minus $\log_{f_2} n$. 
$$
\sum_{i=1}^{m} r_i > \frac{1}{r}\sum_{i=1}^{m} b_i + \log_{f_2} f_2\beta - \log_{f_2} n
$$
But red rounds correspond to events (E1) and (E4), and we have already shown
in Claims~\ref{claimE1limit} and~\ref{claimE4limit}
 that these occur at most $(K\log n)/(4r(r+2))$ times altogether in the first $K\log n$ rounds. 
Hence the total number of rounds in $T$, both red and blue, is less than  
$(K\log n)/(4r(r+2)) + (K\log n)/(4(r+2)) + r \log n/\log f_2$.
By our choice of $K_0$, this is less than $(K\log n)/(2(r+2))$.
This proves Claim~\ref{runclaim}.

\begin{subclaim}\label{nohearclaim}
With failure probability $O(1/n^{k+1})$, $v$ hears a beep at most $(K\log n)/(r+2)$ times 
in the first $K\log n$ rounds.
\end{subclaim}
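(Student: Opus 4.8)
The plan is to convert the neighbourhood-weight bound of Claim~\ref{runclaim} into a bound on the number of beeps heard by $v$, by means of a union bound followed by the same stochastic-domination device used for Claim~\ref{claimE4limit}. The starting observation is that, conditioned on the history $\mathcal F_{t-1}$ up to round $t$, the probability that $v$ hears a beep in round $t$ is
$$1-\prod_{u\in\Gamma(v)}(1-\mu_t(u)) \le \sum_{u\in\Gamma(v)}\mu_t(u)=\mu_t(\Gamma(v)),$$
by the union bound. Thus the conditional probability of hearing a beep is controlled entirely by the neighbourhood weight $\mu_t(\Gamma(v))$, which is precisely the quantity bounded in Claim~\ref{runclaim}.

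Next I would split the first $K\log n$ rounds according to whether $\mu_t(\Gamma(v))>f_2\beta$. Call a round \emph{quiet} if $\mu_t(\Gamma(v))\le f_2\beta$. By Claim~\ref{runclaim}, with failure probability $O(1/n^{k+1})$ there are at most $(K\log n)/(2(r+2))$ non-quiet rounds, and I count every non-quiet round pessimistically as one in which $v$ hears a beep. It then remains only to bound the number of quiet rounds in which $v$ hears a beep. On a quiet round the conditional beep probability is at most $f_2\beta=1/(4(r+2)\phi)$, and since $\phi=\lambda/(1-e^{-\lambda})>1$ this is strictly less than $1/(4(r+2))$.

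Because whether round $t$ is quiet is determined by $\mathcal F_{t-1}$, while the conditional beep probability on a quiet round is always at most $1/(4(r+2))$, the number of quiet rounds on which $v$ hears a beep is stochastically dominated by a binomial variable $X$ with parameters $K\log n$ and $1/(4(r+2))$, by the coupling argument already used for Claim~\ref{claimE4limit}. Then $\E X=(K\log n)/(4(r+2))$, and the target threshold $(K\log n)/(2(r+2))$ is exactly $2\,\E X$, so Inequality~\eqref{bigup} gives
$$\Prob\big[X>(K\log n)/(2(r+2))\big]\le\exp(-\E X/3)=\exp\big(-(K\log n)/(12(r+2))\big).$$
Since $K_0\ge 48r(r+2)\ge 12(r+2)$, the right-hand side is at most $n^{-(k+1)}$. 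Adding the two contributions, with total failure probability $O(1/n^{k+1})$ the vertex $v$ hears a beep at most $(K\log n)/(2(r+2))+(K\log n)/(2(r+2))=(K\log n)/(r+2)$ times, as claimed.

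The only genuine subtlety, and hence the step I would treat most carefully, is the passage to independent trials: the beep events across rounds are dependent, and whether a round is quiet is itself a random, history-determined event, so one cannot simply multiply per-round probabilities. The resolution is the stochastic-domination device already established for Claim~\ref{claimE4limit}, which requires only a uniform conditional upper bound on the per-round probability; the slack factor of $\phi$ hidden in $f_2\beta<1/(4(r+2))$ is exactly what lets the threshold be $2\,\E X$ and makes the off-the-shelf estimate~\eqref{bigup} applicable. Everything else is a union bound together with the choice of $K_0$, which is calibrated precisely so that the exponent dominates $(k+1)\log n$.
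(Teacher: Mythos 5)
Your proposal is correct and follows essentially the same route as the paper: split the first $K\log n$ rounds by whether $\mu_t(\Gamma(v))\le f_2\beta$, charge all non-quiet rounds via Claim~\ref{runclaim}, and control beeps on quiet rounds by the coupling/stochastic-domination device plus Inequality~\eqref{bigup}, exactly as in the paper's proof. The only (harmless, indeed slightly simpler) deviation is that you bound the conditional beep probability on quiet rounds by the union bound $1-\prod_{u\in\Gamma(v)}(1-\mu_t(u))\le\mu_t(\Gamma(v))\le f_2\beta<1/(4(r+2))$, whereas the paper derives the same $1/(4(r+2))$ bound via Inequality~\eqref{exp2} with the factor $\phi$.
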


By our choice of $\beta$, we have that 
$f_2\beta = (1 - e^{-\lambda})/(4 (r+2) \lambda) < (1 - e^{-\lambda})$.
Hence we may apply Inequality~\eqref{exp2}, to show that when $\mu_t(\Gamma(v)) \le f_2\beta$ 
the probability that $v$ hears no beep is 
\begin{align*}
\prod_{x \in \Gamma(v)}(1-\mu_t(x)) 
& \ge \prod_{x \in \Gamma(v)}\exp(-\phi\mu_t(x))\\
& \ge \exp(-\phi\mu_t(\Gamma(v)))\\
& \ge \exp(-\phi f_2\beta) \ge 1-\phi f_2\beta,
\end{align*}
and so $v$ hears a beep with probability at most $\phi f_2\beta$ which equals $1/(4(r+2))$.
Using 
% Inequality~
\eqref{bigup}, this implies that with failure probability $O(n^{-(k+1)})$
there are at most $(K\log n)/(2(r+2))$ rounds among the first $K\log n$ 
at which $\mu_t(\Gamma(v)) \le f_2\beta$ and $v$ hears a beep. 
By Claim~\ref{runclaim}, with the same failure probability,
there are also at most $(K\log n)/(2(r+2))$ rounds at which $\mu_t(\Gamma(v)) > f_2\beta$
(and $v$ might hear a beep at any of these steps).
It follows that, with failure probability $O(n^{-(k+1)})$,
$v$ hears a beep at most $(K\log n)/(r+2)$ times
in the first $K\log n$ rounds.
This proves Claim~\ref{nohearclaim}.

\begin{subclaim}\label{claimterminate}
With failure probability $O(1/n^{k+1})$,
$v$ becomes inactive during the first $K\log n$ rounds.
\end{subclaim}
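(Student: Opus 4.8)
The plan is to combine the counting bounds of Claims~\ref{claimE1limit}--\ref{nohearclaim} to show that, with high probability, there are many rounds in which $v$ has a genuine opportunity to enter the independent set, and then to argue that $v$ almost surely takes one of them. Call a round $t$ \emph{good} if $v$ hears no beep at round $t$ and $\mu_t(v)\ge\alpha$. The point of this definition is that in a good round $v$ joins the MIS precisely when it beeps: conditioned on the history of the first $t-1$ rounds and on the neighbours' beeps at round $t$ (which already decide that $v$ hears nothing), $v$'s own beep is an independent event of probability $\mu_t(v)\ge\alpha$, so $v$ is added to the independent set with conditional probability at least $\alpha$. Throughout I assume $v$ is still active (otherwise it has already become inactive, as required).

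First I would lower-bound the number of good rounds among the first $K\log n$. By Claim~\ref{nohearclaim}, with failure probability $O(n^{-(k+1)})$ there are at least $(r+1)(K\log n)/(r+2)$ rounds at which $v$ hears no beep. To see that $\mu_t(v)$ is rarely small I would run a potential argument on $\Psi_t:=\log(1/\mu_t(v))\ge 0$, which starts at $\Psi_1\le\log(1/p_0)$. In a no-hear round with $\mu_t(v)<\alpha\le 1/f_2$ no capping occurs, so $\Psi$ drops by exactly $\log f\ge\log f_1$; in a hear round $\Psi$ rises by at most $\log f_2=(r-1)\log f_1$. Since the total rise is bounded by $(\#\text{hear})\log f_2\le (K\log n)(r-1)\log f_1/(r+2)$ and $\Psi$ stays nonnegative, the number of no-hear rounds with $\mu_t(v)<\alpha$ is at most $(r-1)(K\log n)/(r+2)+O(1)$. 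Subtracting, at least $G:=2(K\log n)/(r+2)-O(1)$ rounds are good.

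Next I would show that $v$ almost certainly joins during one of these good rounds. The single-round bound from the first paragraph, $\Pr[v\text{ survives a good round}]\le 1-\alpha$, must be multiplied across good rounds. Here the only delicate point is the order of conditioning: the event ``round $t$ is good'' depends on $\mathcal{F}_{t-1}$ together with the neighbours' beeps at round $t$, but \emph{not} on $v$'s own beep, which is an independent coin of bias $\mu_t(v)$. Exposing the neighbours' beeps before $v$'s beep at each round (equivalently, passing to the filtration generated by all beeps except those of $v$), the survival probabilities telescope by iterating conditional expectations, so on the event that there are at least $G$ good rounds the probability that $v$ never joins is at most $(1-\alpha)^{G}\le\exp(-\alpha G)$. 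Because $K_0\ge 8r(r+2)/(e^{-\phi\lambda}(1-e^{-\alpha}))\ge 8r(r+2)/\alpha$, we get $\alpha G\ge(k+1)\log n$, so this is $O(n^{-(k+1)})$.

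Finally I would assemble the pieces: a union bound over the failure events of Claims~\ref{claimE1limit}, \ref{claimE4limit}, \ref{runclaim} and \ref{nohearclaim}, together with the $\exp(-\alpha G)$ estimate above, gives that $v$ becomes inactive within the first $K\log n$ rounds with failure probability $O(n^{-(k+1)})$. The step I expect to require the most care is the conditional-independence bookkeeping of the third paragraph, since the ``good round'' indicator mixes $\mathcal{F}_{t-1}$-information with round-$t$ neighbour beeps; one must fix the exposure order so that the per-round survival bounds legitimately multiply. A clean alternative that sidesteps this is to redefine a good round using $\mu_t(\Gamma(v))\le f_2\beta$ (which forces $v$ to be $\lambda$-light, as noted in the proof of Claim~\ref{nohearclaim}) and $\mu_t(v)\ge\alpha$, and to apply Lemma~\ref{light} with $W=\{v\}$; goodness is then $\mathcal{F}_{t-1}$-measurable and the per-round success probability is $e^{-\phi\lambda}(1-e^{-\alpha})$, which is exactly the quantity whose reciprocal appears in the definition of $K_0$.
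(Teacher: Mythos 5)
Your proposal is correct and takes essentially the same route as the paper's proof: both arguments invoke Claim~\ref{nohearclaim} to guarantee at least $\frac{r+1}{r+2}K\log n$ no-hear rounds, use the relation $\log f_2=(r-1)\log f_1$ built into $r$ to show that all but a bounded fraction of these rounds leave $v$ with beep probability above a fixed constant (the paper via a direct matching of decreases against increases with threshold $p_0$, you via the potential $\log(1/\mu_t(v))$ with threshold $\alpha$, which also absorbs the cap at $1$ that the paper handles by a separate observation about rounds where $p$ jumps to $1$), and conclude with a geometric product bound giving failure probability $O(n^{-(k+1)})$. If anything, your treatment of the conditioning (exposing the neighbours' round-$t$ beeps before $v$'s own coin so the per-round survival factors legitimately multiply) is more careful than the paper's, which simply multiplies the per-round termination probabilities.
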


From the previous claim, we may assume that $v$ hears a beep on at most $K\log n/(r+2)$
occasions during the first $K\log n$ rounds.
On these occasions it decreases its local probability value $p$ by a factor of at most $f_2$.
We shall refer to these as red steps.

Hence there are at least $\frac{(r+1)}{(r+2)} K \log n$ rounds during the first $K\log n$ 
rounds where $v$ does not hear a beep, so it either terminates, 
or increases its local probability value $p$ by a factor of at least $f_1$, 
or else increases $p$ to 1.
We shall refer to these as blue rounds.
Note that if $v$ beeps in a blue round then it will terminate in that round. 
Hence a blue round where the value of $p$ increases to 1 must be immediately followed by a red 
round, or a blue round where $v$ terminates.

Now, by our choice of $r$, $f_1^r > f_2$. 
This means that there must be 
at least $\frac{1}{(r+2)} K \log n$ blue rounds during the first $K\log n$ rounds
where either $v$ has terminated, or else the local probability value $p$ at $v$ is at least 
as high as the initial value, $p_0$.

The probability that $v$ will terminate at each of these blue rounds is at least $p_0$,
so the probability that $v$ remains active throughout all these blue rounds is at most
$(1-p_0)^{K \log n/(r+2)}$. 
Using Inequality~\eqref{exp}, this means that the probability that $v$ remains active throughout these
rounds is at most 
$$\exp(-p_0 K \log n/(r+2)).$$
By our choice of $K_0$, we have $K_0 > (r+2)/p_0$, so this is $O(n^{-(k+1)})$.
Hence $v$ terminates with a failure probability that is $O(n^{-(k+1)})$.

This proves Claim~\ref{claimterminate}, and completes the proof of Theorem~\ref{logrounds}.
\hfill~
\qed
\end{proof}

\begin{corollary}\label{logexpected}
The expected number of rounds taken by the algorithm in Table~\ref{alg:MISlocal} 
on any graph with $n$ nodes is $O(\log n)$.
\end{corollary}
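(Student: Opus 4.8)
The plan is to convert the high-probability guarantee of Theorem~\ref{logrounds} into a bound on the expectation, using the tail-sum identity $\E[T]=\sum_{t\ge0}\Prob[T>t]$, where $T$ denotes the (random) number of rounds before every vertex becomes inactive and the algorithm terminates. Since Theorem~\ref{logrounds} already controls $\Prob[T>K_0(k+1)\log n]$ for each integer $k\ge1$, the whole argument reduces to summing these tail probabilities and checking that the contribution beyond the first block is negligible.

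First I would record the tail bound in the form $\Prob[T>K_0(k+1)\log n]\le C/n^k$ for every integer $k\ge1$, where $C$ is a constant independent of $k$. Then I would split the tail sum at the threshold $m_1=2K_0\log n$: the terms with $t<m_1$ each contribute at most $1$, giving at most $2K_0\log n$ in total, while for $t\ge m_1$ I would group the indices into consecutive blocks $[m_k,m_{k+1})$ with $m_k=K_0(k+1)\log n$, each of length $K_0\log n$. On the $k$-th block every term is bounded by $\Prob[T>m_k]\le C/n^k$, so that block contributes at most $(K_0\log n)\,C/n^k$. Summing the resulting geometric series,
\[
\E[T]\ \le\ 2K_0\log n\ +\ K_0 C\log n\sum_{k\ge1}n^{-k}\ =\ 2K_0\log n\ +\ O\!\left(\frac{\log n}{n}\right),
\]
which is $O(\log n)$.

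The only step requiring genuine care is verifying that the constant $C$ in the tail bound is uniform in $k$; this is what makes the geometric series sum to something $o(\log n)$ rather than merely finite. Inspecting the proof of Theorem~\ref{logrounds}, the failure probability attached to each fixed vertex in each of Claims~\ref{claimE1limit}--\ref{claimterminate} is a fixed constant multiple of $n^{-(k+1)}$ (the exponents all arise as $\exp(-(k+1)\log n)=n^{-(k+1)}$ with $k$-independent prefactors), and the final union bound over the $n$ vertices multiplies this by $n$, yielding $C/n^k$ with $C$ depending only on $p_0,f_1,f_2$. Once this uniformity is in hand, the remainder is the routine geometric summation displayed above (valid for $n\ge2$, which suffices since the claim is asymptotic in $n$).
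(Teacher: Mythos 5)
Your proof is correct and takes essentially the same route as the paper, which simply packages your block decomposition by rescaling to $T'=\lceil T/(K_0\log n)\rceil$ and computing $\E[T']=\sum_{k\ge1}\Prob[T'\ge k]=O(1)$ --- the same geometric tail summation in disguise. Your explicit verification that the constant in the $O(1/n^k)$ bound of Theorem~\ref{logrounds} is uniform in $k$ is a point the paper leaves implicit (its constant $c'$), and your check of it against Claims~\ref{claimE1limit}--\ref{claimterminate} is accurate.
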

\begin{proof}
Let $T$ be the total number of rounds taken by the algorithm and let
$T' = \lceil T/(K_0 \log n) \rceil$, where $K_0$ is the constant identified
in Theorem~\ref{logrounds}.

By Theorem~\ref{logrounds}, we have that, for any $k \geq 1$,
$$\Prob[T' > k+1] \leq c'/n^{k}$$
for some constant $c'$.
% Hence $\Prob[T' = k+2] \leq c'/n^{k}$, so
% $\E[T'] \leq 1 + 2 + \sum_{i=3}^{\infty} i c'/n^{i-2} = c''$, for some constant $c''$.
% Hence $\E[T]$ is $O(\log n)$.
Hence $\E[T'] = \sum_{k\ge 1}\Prob[T' \ge k] = O(1)$.
\hfill~
\qed
\end{proof}

%Theorem~\ref{logrounds} and Corollary~\ref{logexpected}
%establish upper bounds on the running time that are logarithmic in the number of nodes.
%Our simulations show that in practice the
%constants are rather low, leading to an efficient practical algorithm
%(see, for example, Figure~\ref{fig:performSweepvLocal}).

\subsection{Expected Number of Signals}

In this section we will show that the expected number of
times that each node signals is bounded by a constant. 
Hence the expected bit complexity per node for this algorithm
{\em does not increase at all} with the number of nodes.

\begin{theorem}
\label{thm:expectedbeeps}
The expected total number of signals broadcast by any node executing the algorithm in Table~\ref{alg:MISlocal} is $O(1)$.
\end{theorem}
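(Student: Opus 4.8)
The plan is to bound the expected number of signals a node $v$ broadcasts by analysing where those signals can come from. A node signals either in a first exchange (when it beeps, with probability equal to its current weight $\mu_t(v)$) or possibly in the second exchange (when it becomes active and joins the MIS, which happens at most once). The second-exchange signal contributes at most $1$ to the total, so the main task is to bound the expected total beeping count $\sum_{t\ge 1}\mu_t(v)$, summed over all rounds until $v$ becomes inactive.

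First I would observe that the local probability value $p$ at $v$ evolves multiplicatively: it is divided by a factor in $[f_1,f_2]$ whenever $v$ hears a beep, and multiplied by a factor in $[f_1,f_2]$ (capped at $1$) otherwise. The key point is that beeping is expensive in expectation but correlated with hearing beeps: if $v$ has a large weight $\mu_t(v)$, then because neighbours update symmetrically, $v$ is relatively likely to hear a beep and drop its weight. I would try to set up a supermartingale or potential-function argument on the weight $\mu_t(v)$ (or on $\log p$) showing that the expected total weight accumulated, $\E\big[\sum_t \mu_t(v)\big]$, is bounded by a constant independent of $n$. The cleanest route is probably to relate the number of beeps to the number of weight-increases: since $p$ is bounded above by $1$ and below by $p_0$-scaled quantities, and each beep of substantial probability forces (in expectation) a subsequent decrease, the multiplicative drift keeps $p$ from staying high for long stretches.

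The most promising concrete approach is to couple $v$'s behaviour to its own feedback. In any round where $v$'s weight is $p$, the expected number of beeps contributed that round is $p$; but I would argue that whenever $p$ is large, the probability $v$ hears a beep is bounded below by a function of $p$, forcing the value to decrease. One can make this quantitative by splitting rounds into those where $\mu_t(\Gamma(v))$ is large versus small, reusing the light/heavy machinery from the time-complexity proof. A self-contained alternative is a direct summation: condition on the total number of rounds $R$ before $v$ terminates; Theorem~\ref{logrounds} and Corollary~\ref{logexpected} give $\E[R]=O(\log n)$, but a naive bound $\sum_t \mu_t(v)\le R$ only yields $O(\log n)$, not $O(1)$. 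So the naive bound is not enough and the multiplicative feedback must be exploited.

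The hard part will be closing the gap between $O(\log n)$ (trivially) and the desired $O(1)$: I must show that $v$'s weight cannot remain high across many rounds, even on the copies of $K_d$ where termination genuinely takes $\Theta(\log^2 n)$ total rounds in the global-probability model. The resolution is that high weight triggers decreases, so the expected \emph{sum} of weights telescopes. Concretely, I would define a potential such as $\Phi_t = \log(1/\mu_t(v))$ and show that each beep of probability $\approx p$ is paid for by an expected increase in $\Phi$ of order $p$, since hearing a beep (probability growing with the ambient weight) multiplies $p$ down by at least $f_1$. Bounding $\Phi$ between its initial and minimal attainable values then caps $\E\big[\sum_t \mu_t(v)\big]$ by a constant depending only on $f_1,f_2,p_0$. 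Assembling this with the at-most-one second-exchange signal gives the $O(1)$ bound, uniformly in $n$.
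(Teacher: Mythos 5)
There is a genuine gap, and it lies in the mechanism you propose as the heart of the argument. You claim that when $v$'s own weight $\mu_t(v)$ is large, $v$ is ``relatively likely to hear a beep'' because ``neighbours update symmetrically.'' This is unsupported and false in general: $\mu_t(v)$ carries no information about $\mu_t(\Gamma(v))$, and a node can sit at $p=1$ for many rounds while all of its neighbours have tiny weight. What actually prevents such a node from beeping many times is not a forced decrease of $p$ but \emph{termination}: if $v$ beeps in a round in which it hears nothing, then {\sc Trying} remains {\sc True} and $v$ joins the MIS in the second exchange. So beeps at silent rounds are terminal and contribute at most one signal in total, and the only beeps that can accumulate are those at rounds where $v$ also hears a beep (and hence decreases $p$). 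The paper's proof is built entirely on this observation about $v$'s own observation sequence --- it needs no statement whatsoever about neighbours' weights, no light/heavy machinery, and no appeal to Theorem~\ref{logrounds} --- whereas your sketch never uses it, leaving your drift argument with nothing to balance the beeps incurred at rounds where hearing a beep is unlikely.

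The potential-function accounting also does not close as stated. First, $\Phi_t=\log(1/p_t)$ is not bounded above: the algorithm has no floor on $p$ (the constant $p_0$ bounds only the \emph{initial} value, contrary to your remark that $p$ is ``bounded below by $p_0$-scaled quantities''), so ``bounding $\Phi$ between its initial and minimal attainable values'' caps nothing. Second, the increments of $\Phi$ are $\pm\log f\in[\log f_1,\log f_2]$, i.e.\ constants, not ``of order $p$,'' so there is no telescoping identity relating $\E\bigl[\sum_t p_t\bigr]$ to the total increase of $\Phi$; moreover, silent rounds in which $v$ does not beep pump $p$ back up by a factor up to $f_2$ at no cost, so high-probability red rounds can recur indefinitely along the realized path, and your sketch does not confront this interleaving. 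The paper resolves it with an elementary charging scheme on the trajectory of $p$: beeps at red rounds achieving a new running minimum of $p$ form a geometric series summing to at most $f_1/(f_1-1)$; every other red round is charged to the most recent blue round with smaller $p$, with at most $r=\lceil\log f_2/\log f_1\rceil$ red rounds per blue round and each red beep probability at most $f_2$ times the blue one, so that conditioned on any beep occurring in such a group, with probability at least $1/(1+rf_2)$ it occurs at the blue round and is terminal; a geometric-trials bound then yields at most $r^2f_2$ expected further beeps, for a total of $1+f_1/(f_1-1)+r^2f_2$. Your instincts are sound on two points --- the naive $O(\log n)$ bound via the running time is insufficient, and the multiplicative feedback must be exploited --- but to repair the proof you need the silent-beep-is-terminal mechanism together with some version of this charging (or a correctly formulated hazard/supermartingale split on the probability $q_t$ that a \emph{neighbour} beeps, not on $\mu_t(v)$).
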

\begin{proof}
Let $v$ be a node executing the algorithm in Table~\ref{alg:MISlocal},
and consider the whole sequence of rounds until $v$ becomes inactive.

Once again we will refer to sending a signal in the first exchange of the algorithm in Table~\ref{alg:MISlocal}
as ``beeping", and receiving such a signal from a neighbour as ``hearing a beep". 

During each round, one of the following 3 things happens in the first exchange:
\begin{description}
\item[Case 1] $v$ hears a beep from its neighbours,
and so decreases its probability of beeping by a factor of $f$ (from $p_t$ to $\frac{1}{f}p_t$).
We will call these {\em red} rounds.
\item[Case 2] $v$ hears no beep from its neighbours,
and so increases its probability of beeping by a factor of $f$ (from $p_t$ to $fp_t$).
We will call these {\em blue} rounds.
\item[Case 3] $v$ hears no beep from its neighbours,
and so increases its probability of beeping to $1$.
We will call these {\em dark blue} rounds.
\end{description}

If $v$ beeps in any blue or dark blue round then it joins the MIS and becomes inactive, 
so the total number of beeps at such rounds is at most one, at the final round in the sequence.
Hence we only need to consider the expected number of beeps at red rounds.

Consider first those red rounds (if any) where the value of $p_t$ is 
at its lowest point in the sequence so far.
The expected number of times that $v$ beeps during this subsequence of rounds 
is bounded by  
$p_1+\frac{p_1}{f_1}+\frac{p_1}{f_1^2}+\frac{p_1}{f_1^3}\cdots 
% = p_1(1+\frac{1}{f_1}+\frac{1}{f_1^2}+\frac{1}{f_1^3}+\cdots)
% = p_1\frac{1-f_1^{-n}}{1-f_1^{-1}} 
\leq 
% \frac{1}{1-f_1^{-1}}p_1 = 
\frac{f_1}{f_1-1}p_1
\leq \frac{f_1}{f_1-1}$.

At all of the remaining red rounds the value of $p_t$ is not at its lowest point so far,
so it was lower at some previous blue round. 
Hence each of these red rounds can be associated with a corresponding earlier blue round: 
the most recent blue round where the value of $p_t$ was lower.
We now define the constant $r = \lceil \log f_2/\log f_1 \rceil$,
where $f_1$ and $f_2$ denote respectively the lower bound and upper bound of $f$ as given in Table~\ref{alg:MISlocal}. 
Note that $f_1^r \geq f_2$. 
Since blue rounds increase the value of $p_t$ by at most a factor of $f_2$,
and red rounds decrease the value of $p_t$ by at least a factor of $f_1$,
it follows that each blue round will be associated with at most $r$ red rounds.

Hence we have partitioned the remaining red rounds into groups of at most $r$ red rounds, 
each associated with a single (earlier) blue round.
We now consider these groups of at most $r+1$ rounds, ordered by the position of the initial blue round.
For any such group, if the probability of beeping at the blue round is $p$, 
then the probability of beeping at any of the associated red rounds is at most $f_2 p$.
Hence the conditional probability of beeping at the initial blue round, given that a beep occurs
somewhere in this group of rounds, is at least $1/(1+rf_2)$.
Hence if we consider the subsequence of groups where at least one beep occurs,
the expected number of such groups before a beep occurs at a blue round is at most $rf_2$
(expected number of failures before the first success in a geometric distribution).
Since each group can contribute at most $r$ beeps,
the expected number of beeps added in these groups
before terminating is at most $r (rf_2)$.

We have shown that the expected number of times that $v$ beeps is at most
$1 + \frac{f_1}{f_1-1} +(\lceil \log {f_2}/\log f_1 \rceil)^2 f_2$, which proves the result.
\qed
\end{proof}

\section{Algorithm for Distributed Greedy Colouring}
\label{secGC}

Our new algorithm for distributed greedy colouring (see Table~\ref{alg:GClocal}) 
is similar to our new distributed MIS selection algorithm. 
At each round, each node may choose, with a certain probability $p$, 
to broadcast its first available colour to all its neighbours, 
indicating that it wishes to use that colour. 
If two neighbouring nodes broadcast the same colour in the same round, 
then they will both abandon choosing that colour in that round. 
On the other hand, if a node broadcasts a colour and none of its neighbouring nodes 
broadcast the same colour in that round, then it is successfully coloured, 
and will notify all its neighbouring nodes that they are forbidden to use that colour. 

As in our MIS selection algorithm, the way that the probability values $p$
are chosen is inspired by the positive feedback mechanisms that control cellular
processes.
The value of $p$ is initialised to some arbitrary value at each node
(above some strictly positive fixed threshold value, $p_0$). 
These values are then independently updated at each node in each round 
using feedback from neighbouring nodes. 
The value of $p$ is decreased at a node 
whenever one or more neighbouring nodes broadcast the same colour, 
and is increased whenever no neighbouring node broadcasts the same colour. 
As in our MIS selection algorithm, we allow each increase or decrease to be by some arbitrary factor $f$,
which may vary at each round, but is always bounded by the global parameters $f_1$ and $f_2$ 
(with $1<f_1\le f_2$).

\begin{table*}[t]
\caption{The algorithm for distributed greedy colouring at each node
\label{alg:GClocal}}{
\centering
\framebox{
\begin{minipage}[c]{0.85\linewidth}
\begin{tabbing}
{\bf Global constants:} \= $p_0$ : lower bound on initial probability value; \\
         \> $f_1,f_2$ : lower and upper bounds on change factor for probability value.\\[0.1cm]
{\bf Local variables:} \> $p$ : local probability value, initialised to some value in $[p_0,1]$;\\
         \> $f$ : change factor for probability value, chosen arbitrarily in $[f_1,f_2]$;\\
         \> {\sc Trying} : Boolean flag, initialised to {\sc False};\\
         \> $S$ : Set of forbidden colours, i.e., those taken by neighbours, initialised to $\emptyset$.
\end{tabbing}
\algsetup{
linenosize=\tiny,
linenodelimiter=.,
indent = 3em
}
\renewcommand{\algorithmiccomment}[1]{\hfill {\sc \small #1}}
\begin{algorithmic}[1]
\WHILE{active, at each time step}
\vspace*{0.2cm}
\STATE *FIRST EXCHANGE*
% \STATE Set {\sc Attempt} = {\sc False}
\STATE Choose the smallest available colour $c$ that is not in $S$;
\STATE With probability $p$, 
set {\sc Trying} $\leftarrow$ {\sc True} and {\bf send} $c$ to all neighbours;
\STATE {\bf Receive} any colour signals sent by neighbours;
\STATE Set $f$ to some value in the interval $[f_1,f_2]$;
\IF{any neighbour sent colour $c$}
   \STATE {\sc Trying} $\leftarrow$ {\sc False} and $p\leftarrow p/f$ (decrease $p$)  
\ELSE
   \STATE $p\leftarrow \min\{fp,1\}$ (increase $p$)
\ENDIF

\vspace*{0.2cm}
\STATE *SECOND EXCHANGE*
\IF{{\sc Trying}}
    \STATE {\bf Send} $c$ to all neighbours;
    \STATE Assign colour $c$ to this node and terminate (become inactive).
\ENDIF
\STATE {\bf Receive} any colour signals sent by neighbours and add all distinct colours received to $S$.
\ENDWHILE
\vspace*{0.2cm}
\end{algorithmic}
\end{minipage}
}
}
\end{table*}

The correctness of the algorithm in Table~\ref{alg:GClocal} follows easily from the two facts below:
\begin{description}
\item[Fact 1] No two nodes that are assigned the same colour in the same round are adjacent. 
\item[Fact 2] The colour assigned to any node is the smallest colour that is different from 
all colours previously assigned to neighbouring nodes. 
\end{description}
Thus, if the algorithm in Table~\ref{alg:GClocal} is run on the nodes of any graph $G$, and all nodes
become inactive, then the colour assigned to each of the nodes defines a greedy colouring of $G$. 

Our analysis of the distributed greedy colouring algorithm shown in Table~\ref{alg:GClocal} 
is very similar to the analysis for the MIS selection algorithm given in Section~\ref{sec:localtime}.

\begin{theorem}\label{GCrounds}
For any fixed values of $p_0 > 0$, and $1 < f_1 \leq f_2$, 
there is a constant $K_0$ and a constant $r$ such that the following holds:
For any graph $G$ with $n$ vertices and maximum degree $\Delta$, and any $k\ge1$,
the algorithm defined in Table~\ref{alg:GClocal} terminates in 
at most $8r(r+2)\Delta + K_0(k+1)\log n$ rounds,
with probability at least $1-O(1/n^k)$.
\end{theorem}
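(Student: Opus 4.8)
The plan is to adapt the proof of Theorem~\ref{logrounds} almost verbatim, introducing the colour structure only where it matters. The key new feature is that in the colouring algorithm each node has a \emph{current target colour} $c$ (its smallest available colour), and two neighbours interfere in a round only when they happen to broadcast the \emph{same} colour. The crucial observation is that the colour of each node is monotonically non-decreasing over time: once a neighbour successfully takes a colour, it is added to $S$ and permanently forbidden, so the target colour $c$ at any node can only increase. Since a greedy colouring uses at most $\Delta+1$ colours, each node passes through at most $\Delta+1$ distinct target colours before it terminates. This is the source of the extra additive $O(r(r+2)\Delta)$ term in the bound.

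\medskip

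First I would redefine the weight $\mu_t(v)$ to be the probability that $v$ broadcasts \emph{its current target colour} in round $t$, and redefine $\lambda$-lightness and the events (E1)--(E4) exactly as before, except that all interference is now measured with respect to neighbours sharing the same target colour. The analogue of Lemma~\ref{light} still holds: if a set $W$ of $\lambda$-light vertices all share a common target colour $c$, then with probability at least $e^{-\phi\lambda}(1-e^{-\mu_t(W)})$ one of them is successfully coloured. However, neighbours with \emph{different} target colours never block each other, so restricting attention to a single colour class only helps the bound. I would fix a vertex $v$ and, crucially, break the analysis into \emph{phases} indexed by the target colour of $v$: within each phase $v$'s target colour is constant, and the entire argument of Theorem~\ref{logrounds} (Claims~\ref{claimE1limit}--\ref{claimterminate}) applies to show that $v$ either advances to a higher target colour or becomes inactive within $K_0(k+1)\log n$ rounds of that phase, with failure probability $O(1/n^{k+1})$.

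\medskip

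The subtle point is that $v$ leaves a phase not only by being coloured but also by having its target colour forced upward when a neighbour claims colour $c$. I would handle this by observing that each such forced advance is itself a successful colouring event at a neighbour, so it is \emph{favourable}: it can only shorten the time before $v$ terminates, never lengthen it. Thus the worst case is the one analysed in Theorem~\ref{logrounds}, applied independently within each of the at most $\Delta+1$ phases. The $8r(r+2)\Delta$ term arises by allocating a block of $8r(r+2)$ rounds of ``overhead'' per colour level---the same factor $8r(r+2)$ that controls the fraction of bad rounds in Claims~\ref{claimE1limit} and~\ref{claimE4limit}---while the $K_0(k+1)\log n$ term is the single high-probability budget needed for $v$ to succeed once its neighbourhood weight is under control, exactly as before.

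\medskip

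The main obstacle I anticipate is making the phase decomposition rigorous while preserving the clean union-bound structure: in Theorem~\ref{logrounds} the whole argument lives inside one window of $K\log n$ rounds, whereas here the relevant window depends on $v$'s (random) colour trajectory. The plan is to circumvent this by bounding \emph{total} running time directly---showing that across the entire execution the number of rounds in which $v$ hears a same-colour beep is $O(\Delta + \log n)$ with high probability---rather than re-running the argument per phase. Concretely, I would re-derive the analogue of Claim~\ref{nohearclaim} over the full horizon of $8r(r+2)\Delta + K_0(k+1)\log n$ rounds, using the fact that the target colour advances at most $\Delta+1$ times and that each advance consumes at most $O(r(r+2))$ rounds of the budget, and then conclude via the same geometric-series argument as in Claim~\ref{claimterminate} that $v$ must successfully broadcast and terminate within this total. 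Taking a union bound over all $n$ vertices then yields the stated failure probability $O(1/n^k)$.
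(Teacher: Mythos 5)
Your final plan (abandon the per-phase decomposition, run the Theorem~\ref{logrounds} machinery once over a horizon of length $8r(r+2)\Delta + K\log n$, then union bound over vertices) is in fact the paper's route, but your proposal misses the one place where the colouring dynamics genuinely alter the argument, and papers over it with an unsupported claim. The paper's only substantive change is in the analogue of Claim~\ref{claimE1limit}: in the MIS setting, each occurrence of (E1) deactivates $v$ itself with probability at least $\phi_1$, so with high probability (E1) occurs at most $(K\log n)/(8r(r+2))$ times; in the colouring setting, a ``success'' of (E1) merely colours some neighbour of $v$ without deactivating $v$, and since each neighbour is coloured at most once this can happen at most $\Delta$ times over the whole execution, giving the bound $\Delta + (K\log n)/(8r(r+2))$ on occurrences of (E1). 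This is the sole source of the additive $8r(r+2)\Delta$ term, which then propagates mechanically through the analogues of Claims~\ref{claimE4limit}, \ref{runclaim}, \ref{nohearclaim} and \ref{claimterminate} over the longer horizon. Your accounting instead counts advances of $v$'s \emph{own} target colour (``each advance consumes $O(r(r+2))$ rounds of the budget''), and your dismissal of the phase-boundary issue rests on the assertion that a forced advance ``can only shorten the time before $v$ terminates'' --- a monotonicity claim you never substantiate and which has no obvious coupling proof: after an advance, $v$'s homogeneous competitors $\Gamma^{(h)}(v)$ change arbitrarily (and may be heavy), while $v$'s probability $p$, possibly driven very low, carries over. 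The paper needs no such claim precisely because it never argues per phase: lightness is defined round-by-round via current colours, the events (E2)--(E4) track the weight of the \emph{entire} neighbourhood $\mu_t(\Gamma(v))$ (not the homogeneous one, which would jump discontinuously whenever $v$'s colour advances and wreck the multiplicative bookkeeping in Claim~\ref{runclaim}), and the colour dynamics enter only through the (E1) count.

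A second, related soft spot: you state the analogue of Lemma~\ref{light} only for sets $W$ sharing a common target colour and assert that restricting to one colour class ``only helps the bound.'' That is backwards. The bound is $e^{-\phi\lambda}(1-e^{-\mu_t(W)})$, and the light set $L_t(v)$ need not be monochromatic; splitting it into colour classes shrinks the relevant $\mu_t$ per class, which in the worst case would degrade the success probability in (E1) by a $\Delta$-dependent factor and destroy the constant $\phi_1$. Fortunately the restriction is unnecessary: taking $F_i$ to be the event that no homogeneous neighbour of $w_i$ broadcasts $w_i$'s colour, the proof of Lemma~\ref{light} goes through verbatim for mixed-colour $W$ (conditioning on earlier $w_j$ not beeping still only increases $\Prob[F_i]$), which is exactly Lemma~\ref{GClight} in the paper. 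With that lemma, the full-neighbourhood weight tracking, and the $\Delta + (K\log n)/(8r(r+2))$ bound on (E1), your full-horizon plan closes as in the paper; as proposed, however, the argument has a genuine gap at its central accounting step.
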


As in Section~\ref{sec:localtime}, we refer to broadcasting any colour $c$ in the first exchange as
``beeping", and receiving {\em the same} colour $c$ from a neighbour in that exchange as
``hearing a beep". 
As before, for any vertex $v$, at any time step $t$, we define the measure $\mu_t(v)$, 
called the ``weight'' of $v$, to be the probability that $v$ beeps in round $t$. 

The set of neighbours of a vertex $v$ which are competing for the same colour as $v$
in any round will be called the {\em homogeneous} neighbours of $v$, 
and will be denoted by $\Gamma^{(h)}(v)$.
We adapt Definition~\ref{def:lambdalightMIS} to refer to homogeneous neighbours only, as follows.
\begin{definition}\label{def:lighth}
For any $\lambda > 0$, 
a vertex $v$ will be called {\em $\lambda$-light$^{(h)}$} at round $t$  
if $\mu_t(\Gamma^{(h)}(v))\le\lambda$ 
and every homogeneous neighbour of $v$ has weight at most $1-\exp(-\lambda)$;
otherwise, vertex $v$ is called {\em $\lambda$-heavy$^{(h)}$}.
\end{definition}
As in Section~\ref{sec:localtime}, we first establish a lower bound on the probability 
that at least one vertex in a set of $\lambda$-light$^{(h)}$ vertices will be coloured at each round.
\begin{lemma}\label{GClight}
Let $W$ be a set of vertices that are $\lambda$-light$^{(h)}$ at round $t$. 
The probability that at least one vertex in $W$ gets coloured in round $t$ is at least
$e^{-\phi\lambda}(1-e^{-\mu_t(W)})$ where $\phi = \lambda/(1-\exp(-\lambda))$.
\end{lemma}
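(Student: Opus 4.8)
The plan is to mirror the proof of Lemma~\ref{light} almost verbatim, simply replacing the full neighbourhood $\Gamma(\cdot)$ by the homogeneous neighbourhood $\Gamma^{(h)}(\cdot)$ throughout. The crucial structural observation that makes this transfer valid is that, in the colouring algorithm of Table~\ref{alg:GClocal}, a vertex $w_i$ is successfully coloured in round $t$ precisely when it beeps (broadcasts its current smallest available colour $c$) and \emph{no neighbour broadcasting the same colour $c$} beeps in the same round. In other words, only the homogeneous neighbours of $w_i$ can prevent it from being coloured; a neighbour competing for a different colour is irrelevant to $w_i$'s success at that colour. This is exactly the feature captured by Definition~\ref{def:lighth}, and it is what lets the argument go through with $\Gamma^{(h)}$ in place of $\Gamma$.

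Concretely, I would order the vertices of $W$ as $w_1,\dots,w_m$ and, as in Lemma~\ref{light}, define
\[
E_i=(\text{$w_i$ beeps; $w_1,\dots,w_{i-1}$ do not beep}),
\]
\[
F_i=(\text{no homogeneous neighbour of $w_i$ beeps}).
\]
The events $E_i\cap F_i$ are again pairwise disjoint, and the probability that some vertex of $W$ gets coloured is at least $\Prob\big[\bigcup_{i=1}^m (E_i\cap F_i)\big]=\sum_{i=1}^m \Prob[E_i]\,\Prob[F_i\mid E_i]$. The same conditioning argument gives $\Prob[F_i\mid E_i]\ge\Prob[F_i]=\prod_{v\in\Gamma^{(h)}(w_i)}(1-\mu_t(v))$, since conditioning on $w_i$ beeping and $w_1,\dots,w_{i-1}$ staying silent can only make it less likely that a homogeneous neighbour of $w_i$ beeps. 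Because $w_i$ is $\lambda$-light$^{(h)}$, every homogeneous neighbour has weight at most $1-\exp(-\lambda)$, so Inequality~\eqref{exp2} applies to each factor, yielding
\[
\prod_{v\in\Gamma^{(h)}(w_i)}(1-\mu_t(v))\ge \exp\big(-\phi\,\mu_t(\Gamma^{(h)}(w_i))\big)\ge \exp(-\phi\lambda),
\]
where the last step uses $\mu_t(\Gamma^{(h)}(w_i))\le\lambda$ from $\lambda$-lightness$^{(h)}$.

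Finally I would assemble the pieces exactly as before: $\sum_{i=1}^m \Prob[E_i]$ is the probability that some vertex of $W$ beeps, equal to $1-\prod_{v\in W}(1-\mu_t(v))\ge 1-\exp(-\mu_t(W))$ by Inequality~\eqref{exp}. Multiplying the uniform lower bound $e^{-\phi\lambda}$ by this quantity gives the claimed bound $e^{-\phi\lambda}(1-e^{-\mu_t(W)})$. I do not anticipate a genuine obstacle here, since the two key facts stated after Table~\ref{alg:GClocal} guarantee that ``beep without a matching homogeneous beep'' is exactly the event of being coloured; the only point requiring a moment's care is verifying that the pairwise disjointness of the $E_i\cap F_i$ and the conditioning inequality $\Prob[F_i\mid E_i]\ge\Prob[F_i]$ survive the restriction to homogeneous neighbours, which they do because restricting the conditioning event to $w_i$'s same-colour competitors only removes constraints that would lower the beep probabilities further.
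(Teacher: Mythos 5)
Your proposal is correct and takes essentially the same route as the paper: the paper's proof of Lemma~\ref{GClight} is literally the one-line remark that it is identical to the proof of Lemma~\ref{light}, with homogeneous neighbourhoods in place of neighbourhoods. Your verification that only same-colour beeps can block a vertex, so that the disjointness of the events $E_i\cap F_i$ and the bound $\Prob[F_i\mid E_i]\ge\Prob[F_i]$ survive replacing $\Gamma$ by $\Gamma^{(h)}$, is exactly the (unstated) justification the paper relies on.
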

\begin{proof}
Identical to the proof of Lemma~\ref{light}.
\hfill~
\qed
\end{proof}
\begin{proof}[of Theorem \ref{GCrounds}]
Fix an arbitrary vertex $v$.
We use essentially the same argument as in the proof of Theorem~\ref{logrounds},
partitioning the neighbourhood of $v$ into $\lambda$-light$^{(h)}$ and $\lambda$-heavy$^{(h)}$
vertices, and following the progress of these sets over time.
Note that we consider the entire neighbourhood of $v$, not just the homogeneous neighbours.
We show that the weight of this entire neighbourhood is small for at least a fixed fraction of the 
time, and hence $v$ fails to receive any colour signals for at least a fixed fraction of the time.

We define the same constants, and the same events $(E1)$ to $(E4)$ 
(see page 14). % (see page~\pageref{constants}).
However, in this proof we will need to allow for the possibility that one or more neighbours
of $v$ are successfully coloured at any time step, which can happen up to $\Delta$ times,
and does not immediately force $v$ to become inactive.

% We will consider the first $8r(r+2)\Delta + K\log n$ rounds.

Each time that (E1) occurs, it follows from Lemma \ref{GClight} that 
with probability at least $e^{-\phi\lambda}(1-e^{-\alpha})$
some $\lambda$-light$^{(h)}$ neighbour of $v$ will be coloured.
Let $\phi_1= e^{-\phi\lambda}(1-e^{-\alpha})$.
As in the proof of Claim~\ref{claimE1limit},
the probability that there are $(K\log n)/(8r(r+2))$ occurrences of $(E1)$
where no neighbour of $v$ is coloured is at most
$(1-\phi_1)^{(K\log n)/(8r(r+2))}\le \exp(-(\phi_1K_0/(8r(r+2)))(k+1)\log n)$.
By our choice of $K_0$, we have $K_0 \geq (8r(r+2))/\phi_1$,
so this probability is at most $\exp(-(k+1)\log n)=n^{-(k+1)}$.  

Hence with failure probability $O(1/n^{k+1})$, the {\em total} number of 
occurrences of $(E1)$ % in the first $8r(r+2)\Delta + K\log n$ rounds
is at most $\Delta + (K\log n)/(8r(r+2))$.

Since we have a weaker upper bound on the number of occurrences of $(E1)$,
we will need to consider a longer sequence of rounds overall.
In fact, we will consider the first $8r(r+2)\Delta + K\log n$ rounds.

Following exactly the same arguments as in the proof of 
Claims~\ref{claimE4prob} and \ref{claimE4limit},
we obtain that with failure probability $O(1/n^{k+1})$, 
(E4) occurs at most $\Delta + (K\log n)/(8r(r+2))$ times 
in the first $8r(r+2)\Delta + K\log n$ rounds.

Next, following the argument used to prove Claim~\ref{runclaim},
but using the weaker bound of $2\Delta + (K\log n)/(4r(r+2))$ for the total number of red rounds,
we obtain that with failure probability $O(1/n^{k+1})$,
$\mu_t(\Gamma(v))>f_2\beta$ at most $4r\Delta + (K\log n)/(2(r+2))$ times 
in the first $8r(r+2)\Delta + K\log n$ rounds.

Now, using the same argument as in Claim~\ref{nohearclaim} we can show that
with failure probability $O(1/n^{k+1})$, the fraction of rounds where 
$v$ hears a beep (or in fact receives any colour signal in the first exchange) 
during the first $8r(r+2)\Delta + K\log n$ rounds is at most $1/(r+2)$.

Finally, using the same argument as in Claim~\ref{claimterminate}, this implies that
with failure probability $O(1/n^{k+1})$,
$v$ becomes inactive during the first $8r(r+2)\Delta + K\log n$ rounds.

Taking a union bound over all possible choices of $v$, as before, gives the result.
\hfill~
\qed
\end{proof}
\begin{corollary}\label{GClogexpected}
The expected number of rounds taken by the algorithm in Table~\ref{alg:GClocal} 
on any graph with $n$ nodes is $O(\Delta+\log n)$.
\end{corollary}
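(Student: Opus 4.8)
The plan is to mirror exactly the argument used for Corollary~\ref{logexpected}, since Corollary~\ref{GClogexpected} stands in the same relation to Theorem~\ref{GCrounds} as Corollary~\ref{logexpected} does to Theorem~\ref{logrounds}. The only new feature is the additive term $8r(r+2)\Delta$ appearing in the high-probability bound of Theorem~\ref{GCrounds}, which I would peel off before normalising by $\log n$.

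First, let $T$ denote the total number of rounds taken by the algorithm, and set $A = 8r(r+2)\Delta$, the $k$-independent offset in Theorem~\ref{GCrounds}. I would define the normalised variable
$$T' = \left\lceil \max\{T - A,\, 0\}/(K_0 \log n)\right\rceil,$$
so that $T'$ is a nonnegative integer and the inequality $T \le A + K_0 (\log n)\, T'$ holds in all cases (trivially when $T \le A$, since then $T' = 0$, and otherwise because $T'$ is a ceiling). The point of subtracting $A$ first is that the tail bound of Theorem~\ref{GCrounds} is naturally expressed as a statement about $T - A$ growing linearly in $k$.

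Next I would convert the conclusion of Theorem~\ref{GCrounds} into a tail bound for $T'$. If $T' > k+1$ then $\max\{T-A,0\} > K_0(k+1)\log n > 0$, so $T > A + K_0(k+1)\log n$; by Theorem~\ref{GCrounds} this event has probability $O(1/n^k)$, and hence $\Prob[T' > k+1] \le c'/n^{k}$ for some constant $c'$ and every $k \ge 1$. Summing the tail as in the proof of Corollary~\ref{logexpected}, via $\E[T'] = \sum_{k \ge 1}\Prob[T' \ge k]$, the first few terms are bounded by $1$ while the remainder decays like $c'/n^{k-2}$, so the series converges to $O(1)$; thus $\E[T'] = O(1)$.

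Finally, taking expectations in $T \le A + K_0 (\log n)\, T'$ gives $\E[T] \le 8r(r+2)\Delta + K_0 (\log n)\,\E[T'] = O(\Delta + \log n)$, as required. I do not expect any genuine obstacle here: the entire probabilistic content is supplied by Theorem~\ref{GCrounds}, and the only point requiring slight care is the bookkeeping of the additive $\Delta$ term, which is handled cleanly by normalising $T - A$ rather than $T$ itself.
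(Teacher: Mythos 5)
Your proposal is correct and takes essentially the same route as the paper, which states Corollary~\ref{GClogexpected} without an explicit proof precisely because it follows from Theorem~\ref{GCrounds} by the same tail-summation argument used for Corollary~\ref{logexpected}. Your normalisation $T' = \lceil \max\{T - 8r(r+2)\Delta,\,0\}/(K_0\log n)\rceil$, the resulting bound $\Prob[T' > k+1] \le c'/n^k$, and the conclusion $\E[T] \le 8r(r+2)\Delta + K_0(\log n)\,\E[T'] = O(\Delta+\log n)$ are exactly the intended adaptation, with the additive $\Delta$ term handled cleanly.
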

Finally, the proof of Theorem~\ref{thm:expectedbeeps} considers only an individual node
and does not take into account whether 
the neighbours of a node become inactive or not, so this proof 
applies equally well to our distributed colouring algorithm, giving the following result.
\begin{theorem}
\label{thm:GCexpectedbeeps}
The expected total number of beeps broadcast by any node executing the algorithm in Table~\ref{alg:GClocal} is $O(1)$.
\end{theorem}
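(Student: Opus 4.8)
The plan is to observe that the proof of Theorem~\ref{thm:expectedbeeps} tracks only the local evolution of a single node's probability value $p$ together with the sequence of beeps it produces, and that this evolution is governed by an identical update rule in Table~\ref{alg:GClocal} and Table~\ref{alg:MISlocal}. So I would simply re-run that argument for the colouring algorithm. First, fix an arbitrary node $v$ and classify each round exactly as before: \emph{red} (some neighbour broadcasts the same colour $c$ that $v$ chose, so $p$ is multiplied by $1/f$), \emph{blue} (no neighbour broadcasts $c$ and $p$ is multiplied by $f$), or \emph{dark blue} (no neighbour broadcasts $c$ and $p$ is raised to $1$). The structural fact that drives the bound carries over verbatim: if $v$ beeps in a blue or dark-blue round then it claims its colour uncontested and becomes inactive, so at most one beep occurs across all blue and dark-blue rounds.

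It then remains to bound the expected number of beeps at red rounds, and here I would reproduce the two-part accounting from the proof of Theorem~\ref{thm:expectedbeeps}. The red rounds at which $p$ sits at its lowest value so far contribute an expected beep count dominated by a geometric series with ratio $1/f_1$, hence at most $f_1/(f_1-1)$. Each remaining red round is charged to the most recent strictly-lower blue round; since $r = \lceil \log f_2/\log f_1\rceil$ gives $f_1^r \ge f_2$, at most $r$ red rounds are charged to any single blue round, and a geometric-distribution estimate on the number of such groups occurring before a successful blue beep yields a further $O(r^2 f_2)$ contribution. Summing gives the same constant bound, independent of $n$ and $\Delta$.

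The only point needing genuine care—and where I would focus—is confirming that nothing in the original argument quietly relied on a feature peculiar to MIS selection. The reinterpretation of ``beeping'' as broadcasting the chosen colour $c$, and ``hearing a beep'' as receiving that same colour, leaves both the update rule for $p$ and the termination rule formally unchanged. Since the proof of Theorem~\ref{thm:expectedbeeps} never refers to the neighbours of $v$ except through whether $v$ hears a beep, and never uses the MIS-specific fact that a successful node's neighbours also deactivate, the transfer is clean. One small addition is that a successfully coloured node also broadcasts $c$ once in the second exchange; but this happens at most once over the node's lifetime and so leaves the total signal count $O(1)$.
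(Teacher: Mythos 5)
Your proposal is correct and follows essentially the same route as the paper, which simply observes that the proof of Theorem~\ref{thm:expectedbeeps} depends only on the local evolution of $p$ at a single node and on whether that node hears a beep, and hence transfers verbatim to the algorithm of Table~\ref{alg:GClocal}. Your explicit check that no MIS-specific feature (such as neighbours deactivating) is used, and your accounting for the single second-exchange broadcast, are exactly the points the paper's one-line argument implicitly relies on.
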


\end{document}